\documentclass[a4paper]{article}

\usepackage[
	%showframe,
	left=3.5cm,
	right=3.5cm,
	top=2cm,
	bottom=2.5cm,
	includeheadfoot]{geometry}

\usepackage[english]{babel}
\usepackage[T1]{fontenc}
\usepackage[utf8]{inputenc}
\usepackage{lmodern}

\newtheorem{theorem}{Theorem}

\newtheorem{lemma}[theorem]{Lemma}
\newtheorem{cor}[theorem]{Corollary}

\newtheorem{conj}{Conjecture}
\newtheorem{prop}[theorem]{Proposition}

\def\QED{\ensuremath{{\square}}}
\def\markatright#1{\leavevmode\unskip\nobreak\quad\hspace*{\fill}{#1}}
\newenvironment{proof}
{\begin{trivlist}\item[\hskip\labelsep{\bf Proof.}]}
	{\markatright{\QED}\end{trivlist}}

\usepackage{graphicx}
\usepackage{todonotes}
\usepackage{amsmath,amssymb}
\usepackage{mathtools}

\newtheorem{definition}[theorem]{Definition}
\newenvironment{refproof}
{\begin{trivlist}\item[]}
	{\markatright{\QED}\end{trivlist}}
\newcommand{\tk}[2]{\hat{E}_{#1}(#2)}%
\newcommand{\dk}[2]{\bar{E}_{#1}(#2)}%
\newcommand{\sk}[2]{E_{#1}(#2)}%
\newcommand{\di}[2]{\bar{I}_{#1}(#2)}%
\newcommand{\si}[2]{I_{#1}(#2)}%
\newcommand{\td}[2]{\hat{\Delta}_{#1}(#2)}
\newcommand{\dd}[2]{\bar{\Delta}_{#1}(#2)}
\newcommand{\sd}[2]{\Delta_{#1}(#2)}
\newcommand{\hd}[1]{\Delta_{cr}(#1)}
\newcommand{\mm}{\lfloor\frac{n}{2}\rfloor-2}%
\newcommand{\mmm}{\lfloor\frac{n}{2}\rfloor-3}%
\newcommand{\ssps}{semi-pair-shellable}%
\newcommand{\sspsy}{semi-pair-shellability}%
\newcommand{\sspsG}{Semi-Pair-Shellable}%
\newcommand{\sspsyG}{Semi-Pair-Shellability}%
\newcommand{\Sspsy}{Semi-pair-shellability}%
\newcommand{\psy}{pair-shellability}%
%
%\title{The Crossing Number of Single-Pair-Seq-Shellable Drawings of Complete Graphs} 
\title{The Crossing Number of \sspsG{} Drawings \\of Complete Graphs} 
\author{ 
	Petra Mutzel 
	\thanks{Department of Computer Science, TU Dortmund University, {\tt petra.mutzel@tu-dortmund.de}}
	\and 
	Lutz Oettershagen 
	\thanks{Department of Computer Science, TU Dortmund University, {\tt lutz.oettershagen@tu-dortmund.de}}
}
\index{Mutzel, Petra}
\index{Oettershagen, Lutz}
\begin{document}
\maketitle
\begin{abstract}
	The Harary-Hill Conjecture states that for $n\geq 3$ every drawing of $K_n$ has at least
	\begin{align*}
	H(n) \coloneqq \frac{1}{4}\Big\lfloor\frac{n}{2}\Big\rfloor\Big\lfloor\frac{n-1}{2}\Big\rfloor\Big\lfloor\frac{n-2}{2}\Big\rfloor\Big\lfloor\frac{n-3}{2}\Big\rfloor
	\end{align*} 
	crossings. 
	In general the problem remains unsolved, however there has been some success in proving the conjecture for restricted classes of drawings.
	The most recent and most general of these classes is seq-shellability~\cite{seqshellable}. 
	In this work, we improve these results and introduce the new class of \emph{\ssps{}} drawings.
	We prove the Harary-Hill Conjecture for this new class using novel results on $k$-edges. %triple cumulated 
	So far, approaches for proving the Harary-Hill Conjecture for specific classes rely on a fixed reference face. %\emph{globally} %counting $k$-edges w.r.t. 
	We successfully apply new techniques in order to loosen this restriction,
	which enables us to select different reference faces when considering subdrawings.
	Furthermore, we introduce the notion of \mbox{\emph{$k$-deviations}} as the difference between an optimal and the actual number of $k$-edges.
	Using $k$-deviations, we gain interesting insights into the essence of $k$-edges, and we further relax the necessity of fixed reference faces.
\end{abstract}
\section{Introduction}
The crossing number $cr(G)$ of a graph $G$ is the smallest number of edge crossings over all possible drawings of~$G$. 
In a drawing $D$ of $G=(V,E)$ every vertex $v \in V$ is represented by a point and every edge $uv\in E$ with $u,v \in V$ is represented by a simple curve connecting the corresponding points of $u$ and $v$.
We call an intersection point of the interior of two edges a crossing. 
The Harary-Hill Conjecture states the following.
\begin{conj}[Harary-Hill \cite{Guy1960}]
	Let $K_n$ be the complete graph with $n$ vertices, then
	\begin{align*}
	cr(K_n) = H(n) &\coloneqq \frac{1}{4}\Big\lfloor\frac{n}{2}\Big\rfloor\Big\lfloor\frac{n-1}{2}\Big\rfloor\Big\lfloor\frac{n-2}{2}\Big\rfloor\Big\lfloor\frac{n-3}{2}\Big\rfloor
	\end{align*} 
\end{conj}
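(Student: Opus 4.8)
The plan is to prove the equality $cr(K_n) = H(n)$ by establishing the two inequalities separately. The upper bound $cr(K_n) \le H(n)$ is the constructive half: I would exhibit an explicit optimal drawing and count its crossings in closed form. The natural choice is Hill's cylindrical (``tin can'') drawing, placing $\lceil n/2\rceil$ vertices on a top circle and $\lfloor n/2\rfloor$ on a bottom circle, drawing the two within-circle complete graphs as nested arcs and the between-circle edges as straight cylinder segments; a direct count of the crossings among the three edge types yields exactly $H(n)$. This already shows $cr(K_n) \le H(n)$ and, more importantly, supplies a drawing in which every subsequent inequality I use holds with equality, which is what pins down the extremal value.

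The hard half is the lower bound $cr(K_n) \ge H(n)$, and the plan is to route it entirely through $k$-edges. For a good drawing $D$, orient each edge and call it a $k$-edge when exactly $k$ of the remaining $n-2$ vertices lie on its smaller side, so $k$ ranges over $0,\dots,\lfloor\frac{n}{2}\rfloor-2$; write $\sk{k}{D}$ for the number of $k$-edges and $\sk{\le k}{D}=\sum_{j=0}^{k}\sk{j}{D}$ for the cumulative count. The first step is the identity, classical in this literature, expressing crossings as a nonnegative combination of cumulative $k$-edge counts,
\[
cr(D)=\sum_{k=0}^{\lfloor n/2\rfloor-2}(n-2k-3)\,\sk{\le k}{D}+c_n,
\]
where $c_n$ depends only on $n$ and every coefficient $n-2k-3$ is nonnegative on the summation range. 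Because the coefficients are nonnegative, a lower bound on each $\sk{\le k}{D}$ transfers directly to a lower bound on $cr(D)$.

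The crux is therefore the cumulative $k$-edge inequality $\sk{\le k}{D}\ge 3\binom{k+2}{2}$, which I must establish for every good drawing $D$ and every $k$. Substituting this tight bound into the identity above collapses the sum to exactly $H(n)$, while the cylindrical drawing of the first paragraph attains $\sk{\le k}{D}=3\binom{k+2}{2}$ for all $k$ and hence certifies both optimality and the value of $c_n$; together these give $cr(D)\ge H(n)=cr(K_n)$ and close the equality. The entire difficulty of the conjecture is thus concentrated in this one inequality.

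I expect this cumulative bound to be the main obstacle, and it is exactly the point at which the general problem remains open. A fixed reference face $F$ lets one classify the edges it separates and argue inductively, peeling vertices while tracking how the $k$-edge counts change, but a single reference face is too rigid to sustain the bound across all subdrawings. My strategy, following the ideas announced in the abstract, is to relax this in two ways: first, to allow \emph{different} reference faces when passing to different subdrawings, so that each subdrawing can be analyzed from its own most favorable vantage point; and second, to measure progress through $k$-\emph{deviations}, the gap between the optimal and the actual number of $k$-edges, which localizes precisely where a drawing can fail to be extremal. The hardest step will be showing that these more flexible arguments still force the tight $3\binom{k+2}{2}$ bound; carrying them through for arbitrary good drawings is the gap that keeps the full conjecture unresolved, whereas the present work pushes them through for the broader class of \ssps{} drawings.
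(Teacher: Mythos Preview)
The statement you are attempting to prove is the Harary-Hill \emph{Conjecture}; the paper states it as such and does not prove it in general. There is therefore no proof in the paper against which to compare your proposal. What the paper actually establishes is the lower bound $cr(D)\ge H(n)$ only for the restricted class of \ssps{} drawings (Theorem~\ref{theorem:main_hhc_abi}), and, conditionally on a further conjecture (Conjecture~\ref{con:con1}), for a broader class (Theorem~\ref{theorem:conjmain}).

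Your outline is an accurate high-level summary of the standard strategy: the upper bound via Hill's cylindrical drawings is classical, and the lower bound is indeed routed through $k$-edges. But you yourself identify the gap: the inequality you need, $\sk{\le k}{D}\ge 3\binom{k+2}{2}$ for every good drawing and every $k$, is precisely what is not known. What you have submitted is a plan with an explicitly acknowledged hole, not a proof, and you say as much in your final paragraph.

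Two technical remarks. First, the identity you write for $cr(D)$ in terms of the singly cumulated counts $\sk{\le k}{D}$ is not the one the paper (or the topological-drawing literature since~\cite{Abrego:2012:CNK:2261250.2261310}) uses; the paper works with the \emph{triple} cumulated quantities $\tk{k}{D}$ via the exact identity of Theorem~\ref{theorem:triplecumubound}, and the target bound is $\tk{k}{D}\ge 3\binom{k+4}{4}$ (Corollary~\ref{corollary:triple_cumu_lower_bound}). The passage from single to double and triple cumulation is not cosmetic: the singly cumulated bound you state is reference-face dependent and is not the quantity one can hope to control for arbitrary faces, which is exactly why the field moved to the cumulated versions. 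Second, your definition of a $k$-edge (``vertices on its smaller side'') omits the reference face entirely; in the topological setting the reference face is essential to the definition, and managing its choice across subdrawings is the whole point of the paper's contribution.
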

There are construction methods for drawings of $K_n$ that lead to exactly $H(n)$ crossings,
for example the class of \emph{cylindrical} drawings first described by Hill~\cite{hararyhill1963}.
However, there is no proof for the lower bound of the conjecture for arbitrary drawings of $K_n$ with $n\geq 13$.
The cases for $n\leq 10$ have been shown by Guy \cite{Guy1960} and for $n=11$ by Pan and Richter \cite{DBLP:journals/jgt/PanR07}. 
Guy \cite{Guy1960} argues that $cr(K_{2n+1})\geq H(2n+1)$ implies $cr(K_{2(n+1)})\geq H(2(n+1))$, hence $cr(K_{12})\geq H(12)$.
McQuillan et al.~\cite{DBLP:journals/jct/McQuillanPR15} showed that $cr(K_{13})\geq 219$. 
{\'A}brego et al.~\cite{abrego2015all} improved the result to $cr(K_{13})\in\{223,225\}$. 

Beside these results for arbitrary drawings, there has been success in proving the Harary-Hill Conjecture for different classes of drawings. 
So far, the conjecture has been verified for 2-page-book \cite{Abrego:2012:CNK:2261250.2261310}, cylindrical \cite{DBLP:journals/dcg/AbregoAF0S14}, $x$-monotone \cite{DBLP:journals/dcg/BalkoFK15,ABREGO2013411}, $x$-bounded  \cite{DBLP:journals/dcg/AbregoAF0S14}, shellable \cite{DBLP:journals/dcg/AbregoAF0S14}, bishellable \cite{AbregoAFMMM0RV15} and recently seq-shellable drawings \cite{seqshellable}.

Seq-shellability is the broadest of the beforehand mentioned classes comprising the others.
%
%Here, verifying the 
Here, the proof of the Harary-Hill Conjecture makes use of the concept of \emph{$k$-edges}. 
Each edge $e\in E$ in a drawing is assigned a specific value between $0$ and $\lfloor\frac{n}{2}\rfloor-1$ w.r.t. a fixed reference face. % (see section \ref{sec:soa} for details on how these values are derived).
%Intuitively, the assigned value corresponds to the minimum number of 
%vertices \emph{left} or \emph{right} 
%to the considered edge (see section \ref{sec:soa} for details).
%Intuitively, each edge separates the remaining $n-2$ to vertices into two distinct sets, and %we assign to each edge the size of the smaller of the two sets
The edge $e$ separates the remaining $n-2$ to vertices into two distinct sets, and %we assign to each edge the size of the smaller of the two sets
is assigned the cardinality $k$ of the smaller of the two sets, i.e. is a $k$-edge
%the assigned value corresponds to the minimum number of vertices \emph{left} or \emph{right} to the considered edge 
(see section \ref{sec:soa} for details).
We can express the number of crossings in a drawing in terms of the numbers of $k$-edges for each $k\in\{0,\ldots,\lfloor\frac{n}{2}\rfloor-2\}$.  % by counting and summing up
Therefore, having lower bounds on the (cumulated) number of $k$-edges implies a lower bound on the crossing number of a drawing.
%Cumulating two times the numbers of $k$-edges for $k\in\{0,\ldots,\lfloor\frac{n}{2}\rfloor-2\}$ leads to \emph{double cumulated $k$-edges}.
After two cumulations, % of the numbers of $k$-edges, a second
we obtain \emph{double cumulated $k$-edges}. %They have been used in the proof of the Harary-Hill Conjecture for seq-shellable drawings.
However, the possibilities of their usage for further improvements to new classes of drawings seem to be limited.
\paragraph{Our contribution and outline}
In this work, we resolve the limitations of double cumulated $k$-edges by applying two new ideas. 
Firstly, instead of double cumulated $k$-edges we utilize \emph{triple cumulated} $k$-edges.
Balko et al. introduced these %triple cumulated $k$-edges 
in \cite{DBLP:journals/dcg/BalkoFK15}.
Secondly, so far all classes, including seq-shellability, depend on a \emph{globally fixed} reference face.
We call a reference face globally fixed if we do not allow to select a different one when considering subdrawings, which constitutes a strong limitation in the proofs. 
In this work, we show that under certain conditions and/or assumptions, we are able to change the reference face \emph{locally} or even without restrictions.
Changing the reference face locally means, given a vertex $v$ incident to the initial reference face, we select a new reference face $F$, such that $F$ is also incident to~$v$.
Furthermore, we introduce a new class of drawings for which we verify the Harary-Hill Conjecture; we call drawings belonging to this class \emph{\ssps{}}.
There are \ssps{} drawings that are not seq-shellable.
But unlike seq-shellability, \sspsy{} does not comprise all previously found classes and only contains drawings with an odd number of vertices. 
However, every $(\lfloor\frac{n}{2}\rfloor-1)$-seq-shellable drawing with $n$ odd is \ssps{}.
Moreover, we introduce \emph{$k$-deviations} of a drawing $D$ of $K_n$.
They are the difference between the numbers of cumulated $k$-edges in $D$ and reference values corresponding to a drawing with exactly $H(n)$ crossings.
They allow us to further relax the necessity of a globally fixed reference face.

The outline of this paper is as follows. In Section \ref{sec:soa} we introduce the preliminaries, and in particular the necessary background on (cumulated) $k$-edges and their usage for verifying the Harary-Hill Conjecture.
In the following Section \ref{sec:triplecumu}, we present our novel results for triple cumulated $k$-edges, followed by the introduction of \ssps{} drawings in Section \ref{sec:pairseqshell}.
We verify the Harary-Hill Conjecture for this class, and discuss the distinctive differences to seq-shellability.
In Section \ref{sec:kdev} we use $k$-deviations to formulate conditions under which we are able to further loosen the need for a globally fixed reference face.
We conjecture these conditions to be true in all good drawings. Assuming our conjecture holds, we prove the Harary-Hill Conjecture for a broad class of drawings that also comprises seq- and \sspsy{}. 
Finally, in Section \ref{sec:conclusions} we draw our conclusions and give an outlook to further possible work.
\section{Preliminaries}\label{sec:soa}
A \emph{drawing} $D$ of a graph $G$ on the plane is an injection $\phi$ from the vertex set $V$ into the plane, and a mapping of the edge set $E$ into the set of simple curves, such that the curve corresponding to the edge $e = uv$ has endpoints $\phi(u)$ and $\phi(v)$, and contains no other vertices \cite{szekely2000successful}. 
We call an intersection point of the interior of two edges a crossing and a shared endpoint of two adjacent edges is not considered a crossing. 
The crossing number $cr(D)$ of a drawing $D$ equals the number of crossings in $D$ and the crossing number $cr(G)$ of a graph $G$ is the minimum crossing number over all its possible drawings.
We restrict our discussions to \emph{good} drawings of $K_n$, and call a drawing \emph{good} if $(1)$ any two of the curves have finitely many points in common, $(2)$ no two curves have a point in common in a tangential way, $(3)$ no three curves cross each other in the same point, $(4)$ any two edges cross at most once and $(5)$ no two adjacent edges cross. It is known that every drawing with a minimum number of crossings is good \cite{schaefer2013graph}.
In the discussion of a drawing $D$, we call the points also vertices, the curves edges and 
$V$ denotes the set of vertices (i.e. points), and $E$ denotes the edges (i.e curves) of $D$. 
If we subtract the drawing $D$ from the plane, a set of open discs remain. 
We call $\mathcal{F}(D) \coloneqq \mathbb{R}^2 \setminus D$ the set of \emph{faces} of the drawing $D$.
If we remove a vertex $v$ and all its incident edges from $D$, we get the subdrawing $D-v$. 
We denote with $f(v)$ the unique face in $D-v$ that contains all the faces that are incident to $v$ in $D$, and call $f(v)$ the \emph{superface} of $v$.
We might consider the drawing to be on the surface of the sphere $S^2$, which is equivalent to the drawing on the plane due to the homeomorphism between the plane and the sphere minus one point. 
Next, we introduce \emph{$k$-edges};
the origins of $k$-edges lie in computational geometry and problems over $n$-point set, especially problems on halving lines and $k$-set \cite{abrego2012k}.
An early definition in the geometric setting goes back to Erd\H{o}s et al \cite{erdos1973dissection}.
Given a set $P$ of $n$ points in general position in the plane,
the authors add a directed edge $e=(p_i,p_j)$ between the two distinct points $p_i$ and $p_j$, and consider the continuation as line that separates the plane into the left and right half plane. There is a (possibly empty) point set $P_L\subseteq P$ on the left side of $e$, i.e. left half plane. Erd\H{o}s et al. assign $k:= \min(|P_L|, |P\setminus P_L|)$ to $e$.
Later, the name $k$-edge emerged and Lov{\'a}sz et al.~\cite{lovasz2004convex} used $k$-edges for determining a lower bound on the crossing number of rectilinear graph drawings. Finally, 
{\'A}brego et al.~\cite{Abrego:2012:CNK:2261250.2261310} extended the concept of $k$-edges from rectilinear to topological graph drawings. 
Every edge in a good drawing $D$ of $K_n$ is a $k$-edge for a specific value of $k\in\{0,\ldots,\lfloor\frac{n}{2}\rfloor-1\}$. Let $D$ be on the surface of the sphere $S^2$, and $e=uv$ be an edge in $D$ and $F\in\mathcal{F}(D)$ be an arbitrary but fixed face; we call $F$ the \emph{reference face}.
Together with any vertex $w\in V\setminus\{u,v\}$, the edge $e$ forms a triangle $uvw$ and hence a closed curve that separates the surface of the sphere into two parts.
For an arbitrary but fixed orientation of $e$, one can distinguish between the left part and the right part of the separated surface. 
If $F$ lies in the left part of the surface, we say the triangle has orientation $+$ else it has orientation $-$. 
For $e$ there are $n-2$ possible triangles in total, of which $0\leq i\leq n-2$ triangles have orientation $+$ (or $-$) and $n-2-i$ triangles have orientation $-$ (or $+$ respectively).
We define the \emph{$k$-value} of $e$ to be the minimum of $i$ and $n-2-i$.
We say $e$ is an \emph{$i$-edge} with respect to the reference face $F$ precisely if its $k$-value equals $i$. See Figure \ref{fig:example} for an example.
{\'A}brego et al.~\cite{Abrego:2012:CNK:2261250.2261310} show that the crossing number of a drawing is expressible in terms of the number of $k$-edges for $0\leq k\leq \lfloor\frac{n}{2}-1\rfloor$ with respect to the reference face. 
The following definitions of the \emph{cumulated} numbers of $k$-edges are
used for determining lower bounds of the crossing number. The double cumulated number of $k$-edges has been defined in \cite{Abrego:2012:CNK:2261250.2261310}, and the triple cumulated number of $k$-edges has been introduced by Balko et al~\cite{DBLP:journals/dcg/BalkoFK15}.
\begin{definition}\emph{\cite{Abrego:2012:CNK:2261250.2261310,DBLP:journals/dcg/BalkoFK15}}
	Let $D$ be good drawing and $\sk{k}{D}$ be the number of $k$-edges in $D$ with respect to a reference face $F\in \mathcal{F}(D)$ and for each $k\in\{0,\ldots, \lfloor\frac{n}{2}\rfloor-1\}$.
	We denote
	\begin{align*}
		\dk{k}{D} \coloneqq \sum_{j=0}^{k} \sum_{i=0}^{j}\sk{i}{D}=\sum_{i=0}^{k}(k+1-i) \sk{i}{D}
	\end{align*}
	the \emph{double cumulated number of $k$-edges}, and 
	\begin{align*}
		\tk{k}{D} \coloneqq \sum_{i=0}^{k}\dk{i}{D} = \sum_{i=0}^{k}{k+2-i \choose 2} \sk{i}{D}
	\end{align*}
	the \emph{triple cumulated number of $k$-edges}. 	
\end{definition}
We also write \emph{double (triple) cumulated $k$-edges} or \emph{double (triple) cumulated $k$-value}
instead of double (triple) cumulated number of $k$-edges. We express the crossing number of a drawing using the triple cumulated $k$-edges.
\begin{theorem}\label{theorem:triplecumubound}\emph{\cite{DBLP:journals/dcg/BalkoFK15}} Let $D$ be a good drawing of $K_n$ and $m=\lfloor \frac{n}{2} \rfloor -2$. With respect to a reference face $F\in \mathcal{F}(D)$ we have for $n$ odd
	\begin{align*}
	cr(D) = 2\cdot \tk{m}{D} - \frac{1}{8}n(n-1)(n-3)
	\end{align*} 
	and $n$ even
	\begin{align*}
	cr(D) =& \tk{m}{D}+\tk{m-1}{D} - \frac{1}{8}n(n-1)(n-2) \textrm{.}
	\end{align*} 
\end{theorem}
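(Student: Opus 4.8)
The plan is to start from the known expression of $cr(D)$ in terms of the (uncumulated) $k$-edge numbers $\sk{k}{D}$, due to {\'A}brego et al., and then perform an Abel-type summation by parts to re-express everything through the triple cumulated quantities $\tk{k}{D}$. Concretely, {\'A}brego et al.\ show that
\begin{align*}
cr(D) = \sum_{k=0}^{\lfloor n/2\rfloor-1} k(n-2-k)\,\sk{k}{D} - \text{(a closed-form correction term)},
\end{align*}
where the correction term depends only on $n$ (it is the ``excess'' coming from the fact that one counts unordered triangles). So the heart of the argument is the purely combinatorial identity expressing the weighted sum $\sum_k k(n-2-k)\sk{k}{D}$ as a linear combination of $\tk{m}{D}$ and (for $n$ even) $\tk{m-1}{D}$, using the constraint that $\sum_k \sk{k}{D} = \binom{n}{2}$ and, crucially, the symmetry/vanishing behaviour of $k$-edges near $k = \lfloor n/2\rfloor-1$.

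First I would recall, from the definitions in the excerpt, that $\tk{k}{D} = \sum_{i=0}^{k}\binom{k+2-i}{2}\sk{i}{D}$, so $\tk{k}{D}$ is a ``quadratic moment'' of the sequence $(\sk{i}{D})$. The key observation is that the weight $k(n-2-k)$ attached to $\sk{k}{D}$ in the crossing-number formula is, up to an affine function of $k$ and a constant, expressible via the second-difference operator applied at the top index $m$: one checks that $\binom{m+2-i}{2} + \binom{(m-1)+2-i}{2}$ (for the even case) or $2\binom{m+2-i}{2}$ (for the odd case) equals $i(n-2-i)$ plus a term linear in $i$ plus a constant, for all $i$ in the relevant range. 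I would verify this algebraically for the odd case by expanding $2\binom{m+2-i}{2} = (m+2-i)(m+1-i)$ with $m = \frac{n-1}{2}-2 = \frac{n-5}{2}$, and similarly in the even case with $m = \frac{n}{2}-2$, substituting directly. The linear-in-$i$ and constant parts are then absorbed using $\sum_i \sk{i}{D} = \binom{n}{2}$ and $\sum_i i\,\sk{i}{D}$, both of which have known closed forms (the latter follows by a standard double-counting of $k$-edges, or is already available in the cited literature). Collecting the $n$-only terms yields exactly $\frac{1}{8}n(n-1)(n-3)$ in the odd case and $\frac{1}{8}n(n-1)(n-2)$ in the even case.

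The main obstacle I anticipate is handling the range of summation cleanly: the triple cumulated sums $\tk{m}{D}$ and $\tk{m-1}{D}$ run only up to $i = m$ or $i = m-1$, whereas the crossing-number formula sums $\sk{k}{D}$ up to $k = \lfloor n/2\rfloor - 1$, which is one or two indices higher. For $n$ odd, $\lfloor n/2\rfloor - 1 = m+1$, so there is a leftover term $\sk{m+1}{D}$; for $n$ even, $\lfloor n/2\rfloor - 1 = m+1$ as well, with both $\sk{m}{D}$ and $\sk{m+1}{D}$ needing care. I would deal with this by invoking the basic fact (from the $k$-edge literature, e.g.\ {\'A}brego et al.) that $\sk{k}{D} = \sk{n-2-k}{D}$ is \emph{not} generally true topologically, but that the top $k$-edge count satisfies a known relation — more precisely, I would use that for the extreme value the weight $k(n-2-k)$ and the combinatorial coefficients are arranged so that the apparently missing term either has weight zero or is reincorporated by extending $\tk{\cdot}{D}$ formally (since $\binom{k+2-i}{2}=0$ when $i>k$). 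Once the index bookkeeping is pinned down, the remaining computation is a routine polynomial identity in $n$, which I would state but not belabor. If a direct verification of the coefficient identity proves messy, an alternative is to induct on $n$ using Guy's parity step, but I expect the direct summation-by-parts route to be cleanest.
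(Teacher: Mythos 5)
First, a remark on the comparison you asked for: the paper does not prove this statement at all --- Theorem \ref{theorem:triplecumubound} is imported verbatim from Balko, Fulek and Kyn\v{c}l \cite{DBLP:journals/dcg/BalkoFK15}, so there is no in-paper proof to measure you against. Your road map --- start from the $K_4$-counting identity of \'Abrego et al.\ and rewrite the quadratic weight $k(n-2-k)$ through the coefficients $\binom{m+2-i}{2}$ defining $\tk{m}{D}$ --- is viable and is essentially the standard derivation. However, two concrete steps in your write-up are wrong as stated. First, the imported identity has the wrong sign: the correct form is $cr(D)=3\binom{n}{4}-\sum_{k}k(n-2-k)\sk{k}{D}$, because in a good drawing each crossing-free $K_4$ contributes $3$ to the weighted sum and each crossing $K_4$ contributes $2$; the weighted sum enters \emph{negatively}, and with your sign convention the computation would terminate at $\mathrm{const}-2\tk{m}{D}$ instead of $2\tk{m}{D}-\mathrm{const}$. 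Second, your claim that $\sum_i i\,\sk{i}{D}$ has a known closed form in $n$ is false: this first moment is drawing-dependent (for $K_5$ it equals $5$ in the convex drawing but $7$ in a crossing-minimal one), so it cannot be used to ``absorb'' a linear-in-$i$ remainder. If the coefficient comparison really did leave a linear term, your argument would have no way to finish.

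Fortunately the second gap is self-repairing, because there is no linear term. For $n$ odd with $m=\frac{n-1}{2}-2$ one has the exact identity $2\binom{m+2-i}{2}=\frac{(n-1)(n-3)}{4}-i(n-2-i)$, and for $n$ even with $m=\frac{n}{2}-2$ one has $\binom{m+2-i}{2}+\binom{m+1-i}{2}=(m+1-i)^2=\frac{(n-2)^2}{4}-i(n-2-i)$; in both cases the discrepancy is a genuine constant, so only $\sum_i\sk{i}{D}=\binom{n}{2}$ is needed. Your boundary worry also resolves itself exactly as you guessed: at $i=\lfloor n/2\rfloor-1=m+1$ the binomial coefficients vanish, so multiplying these identities by $\sk{i}{D}$ and summing over the full range $0\le i\le\lfloor n/2\rfloor-1$ produces precisely $\tk{m}{D}$ (resp.\ $\tk{m}{D}+\tk{m-1}{D}$) with no stray top terms. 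Substituting into the correctly signed identity and using $3\binom{n}{4}-\frac{(n-1)(n-3)}{4}\binom{n}{2}=-\frac{1}{8}n(n-1)(n-3)$ for $n$ odd, resp.\ $3\binom{n}{4}-\frac{(n-2)^2}{4}\binom{n}{2}=-\frac{1}{8}n(n-1)(n-2)$ for $n$ even, yields the theorem. So: keep the plan, but fix the sign of the starting identity and delete the appeal to a closed form for $\sum_i i\,\sk{i}{D}$, replacing it by the verification that the coefficient identity is exact with constant remainder.
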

Consequently, for $n$ odd $\tk{m}{D}$ and $n$ even $\tk{m}{D}+\tk{m-1}{D}$ are unique for all faces of $D$. 
Notice, this does not apply to the double cumulated case, i.e. $\dk{m}{D}$ or $\dk{m}{D}+\dk{m-1}{D}$, respectively.
Using the following lower bounds, we are able to verify the Harary-Hill Conjecture.
\begin{cor}\emph{\cite{DBLP:journals/dcg/BalkoFK15}~}\label{corollary:triple_cumu_lower_bound}
	Let $D$ be a good drawing of $K_n$. % and $F\in \mathcal {F}(D)$. 
	If $n$ is odd and %with respect to $F$
	\begin{align*}
	\tk{\frac{n-1}{2}-2}{D}\geq 3{\frac{n-1}{2}+2 \choose 4}			
	\end{align*}
	or $n$ is even and with respect to a face $F\in\mathcal{F}(D)$
	\begin{align*}
	\tk{\frac{n}{2}-2}{D} \geq 3{\frac{n}{2}+2 \choose 4}\text{~and~~} \tk{\frac{n}{2}-3}{D}\geq 3{\frac{n}{2}+1 \choose 4},	
	\end{align*}
	then $cr(D)\geq H(n)$.
	%	\mEndProof
\end{cor}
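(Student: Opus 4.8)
The plan is to derive everything directly from Theorem~\ref{theorem:triplecumubound}, which already expresses $cr(D)$ \emph{exactly} in terms of triple cumulated $k$-edges, and then to observe that the hypothesised lower bounds are calibrated precisely so that the resulting quantity equals $H(n)$. Write $m=\lfloor n/2\rfloor-2$. For odd $n$ the index $\frac{n-1}{2}-2$ appearing in the hypothesis is exactly $m$, and for even $n$ the two indices $\frac n2-2$ and $\frac n2-3$ are exactly $m$ and $m-1$. Hence in each case the hypothesis is exactly a lower bound on the very quantity occurring in Theorem~\ref{theorem:triplecumubound}. (For odd $n$ no reference face needs to be named, since by the remark following Theorem~\ref{theorem:triplecumubound} the value $\tk{m}{D}$ is the same for every face.)

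For odd $n$, substitute $\tk{m}{D}\ge 3\binom{\frac{n-1}{2}+2}{4}$ into $cr(D)=2\tk{m}{D}-\frac18 n(n-1)(n-3)$ to get $cr(D)\ge 6\binom{\frac{n-1}{2}+2}{4}-\frac18 n(n-1)(n-3)$. It then remains to verify the polynomial identity $6\binom{\frac{n-1}{2}+2}{4}-\frac18 n(n-1)(n-3)=H(n)$; writing $n=2t+1$ and expanding $\binom{t+2}{4}=\frac{(t+2)(t+1)t(t-1)}{24}$, one factors out $\frac{t(t-1)}{4}$ and both sides collapse to $\frac14 t^2(t-1)^2=H(2t+1)$.

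For even $n$, fix a face $F$ for which both hypothesised inequalities hold (such a face exists by assumption) and apply Theorem~\ref{theorem:triplecumubound} with this same reference face~$F$, which is legitimate since the identity there is valid for an arbitrary reference face. Substituting $\tk{m}{D}\ge 3\binom{\frac n2+2}{4}$ and $\tk{m-1}{D}\ge 3\binom{\frac n2+1}{4}$ into $cr(D)=\tk{m}{D}+\tk{m-1}{D}-\frac18 n(n-1)(n-2)$ yields $cr(D)\ge 3\binom{\frac n2+2}{4}+3\binom{\frac n2+1}{4}-\frac18 n(n-1)(n-2)$, and it remains to check that this equals $H(n)$; writing $n=2t$, the two binomial terms combine (through the common factor $\frac{(t+1)t(t-1)}{8}$) to $\frac{t^2(t^2-1)}{4}$, and after subtracting $\frac{t(2t-1)(t-1)}{2}$ one is left with $\frac{t(t-1)^2(t-2)}{4}=H(2t)$.

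The only genuine work is the algebraic verification of these two identities, and even that is routine once the parity cases are separated so that the floor functions in $H(n)$ become honest polynomials; there is no combinatorial obstacle, since all the structural content is already carried by Theorem~\ref{theorem:triplecumubound}. The one point deserving explicit care is the even case: the individual values $\tk{m}{D}$ and $\tk{m-1}{D}$ are face-dependent (only their sum is not), so the face $F$ used in the hypothesis must be the same face used when invoking Theorem~\ref{theorem:triplecumubound}.
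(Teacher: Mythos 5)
Your proposal is correct: the paper states this corollary as a cited consequence of Theorem~\ref{theorem:triplecumubound} without giving a proof, and your derivation — substituting the hypothesised lower bounds into the exact crossing-number formulas and verifying the two polynomial identities against $H(n)$ in each parity case — is precisely the intended argument; I checked both identities and they hold. Your remark that in the even case the reference face in the hypothesis must coincide with the one used in Theorem~\ref{theorem:triplecumubound} (since only the sum $\tk{m}{D}+\tk{m-1}{D}$ is face-independent) is the right point of care and is consistent with the paper's own remark following that theorem.
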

If a vertex touches the reference face, it is incident to a predetermined set of $k$-edges.
\begin{lemma}\label{lemma:vatf}\emph{\cite{Abrego:2012:CNK:2261250.2261310}}
	Let $D$ be a good drawing of $K_n$, $F\in \mathcal{F}(D)$ and $v \in V$ be a vertex incident to $F$.
	With respect to $F$, vertex $v$ is incident to two $i$-edges for $0\leq i \leq \lfloor \frac{n}{2}\rfloor -2$.
	Furthermore, if we label the edges incident to $v$ counter clockwise with $e_0,\ldots,e_{n-2}$ such that $e_0$ and $e_{n-2}$ are incident to the face $F$, then $e_i$ is a $k$-edge with $k=\min(i,n-2-i)$ for $0\leq i \leq n-2$.
\end{lemma}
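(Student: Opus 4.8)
The plan is to work on the sphere $S^2$, where a triangle $vu_iu_j$ becomes a genuine closed curve, and to reduce everything to the local picture at $v$. First I would choose a neighbourhood $N$ of $v$ so small that it meets only the $n-1$ edges $e_0,\dots,e_{n-2}$ incident to $v$ and no other edge or vertex of $D$. Then $N$ is cut by these edges into $n-1$ open sectors, one between each pair of cyclically consecutive edges, and by the choice of the counter-clockwise labelling the sector between $e_{n-2}$ and $e_0$ is contained in $F$; I fix a point $p$ in it, so $p\in F$. The only structural fact I need about the triangles is that for $i\neq j$ the three edges $vu_i$, $vu_j$, $u_iu_j$ are pairwise adjacent, hence pairwise non-crossing in a good drawing, so $T_{ij}:=vu_iu_j$ is a \emph{simple} closed curve and $S^2\setminus T_{ij}$ consists of two open discs.

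Next I would compute, for a fixed $e_i$ oriented from $v$ to $u_i$ and each $j\neq i$, on which side of $T_{ij}$ the face $F$ lies. Since $F$ is a face it is disjoint from $T_{ij}$, so it lies entirely in one of the two discs of $S^2\setminus T_{ij}$; to identify that disc I look inside $N$, where $T_{ij}$ consists of the two radial arcs along $e_i$ and $e_j$, splitting $N$ into exactly two sectors, with the two discs of $S^2\setminus T_{ij}$ meeting $N$ in one of these sectors each. Hence the disc containing $F$ is the one whose $N$-sector contains $p$. Reading this off the rotation at $v$: walking counter-clockwise from $e_i$ one meets first $e_{i+1},\dots,e_{n-2}$, then the $F$-sector (containing $p$), then $e_0,\dots,e_{i-1}$; so the sector immediately counter-clockwise of $e_i$ reaches $p$ before closing up at $e_j$ exactly when $e_j\in\{e_0,\dots,e_{i-1}\}$. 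Therefore $F$ lies in the left part of $T_{ij}$ for exactly $i$ of the $n-2$ choices of $j$ and in the right part for the remaining $n-2-i$, so by definition the $k$-value of $e_i$ with respect to $F$ is $\min(i,n-2-i)$, which is the ``furthermore'' part of the statement. The first part then follows by counting: an edge $e_j$ at $v$ is an $i$-edge iff $\min(j,n-2-j)=i$, i.e.\ iff $j=i$ or $j=n-2-i$, and for $0\le i\le\lfloor \frac{n}{2}\rfloor-2$ we have $i<\frac{n-2}{2}$, so these two indices are distinct and both lie in $\{0,\dots,n-2\}$; hence $v$ is incident to exactly the two $i$-edges $e_i$ and $e_{n-2-i}$.

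The delicate point is the local Jordan-curve bookkeeping in the middle step: justifying rigorously that the two complementary discs of the simple closed curve $T_{ij}$ occupy the two sectors of $N$, so that the side of $F$ is legitimately determined by the single point $p$, and then keeping the orientation convention uniform across all $i$ simultaneously, that is, orienting every $e_i$ away from $v$ and checking that this consistently yields the counts $i$ and $n-2-i$ rather than swapping them for some edges. Everything else is routine once the picture near $v$ is pinned down.
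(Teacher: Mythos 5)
The paper does not prove this lemma at all: it is imported verbatim from \'Abrego et al.~\cite{Abrego:2012:CNK:2261250.2261310}, so there is no in-paper argument to compare against. Your reconstruction is correct and is essentially the standard proof of this fact: reduce to the rotation at $v$, observe that each triangle $vu_iu_j$ is a simple closed curve because its three edges are pairwise adjacent and hence non-crossing in a good drawing, and decide the side of $F$ by a single point $p$ in the sector of the link of $v$ contained in $F$. The one step you flag as delicate --- that the two complementary discs of $T_{ij}$ meet the small neighbourhood $N$ in the two distinct sectors cut out by $e_i$ and $e_j$ --- is sound: $v$ lies on the Jordan curve, so every neighbourhood of $v$ meets both complementary discs, and since each sector is connected and disjoint from $T_{ij}$ it lies wholly in one disc; hence the two sectors cannot land in the same disc. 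Your second worry, about keeping the orientation convention uniform across all $e_i$, is actually moot: the $k$-value is defined as $\min(i,n-2-i)$ over the two orientation counts, so it is insensitive to which side you call left, and the count ``$i$ versus $n-2-i$'' is all you need. The final counting step ($e_i$ and $e_{n-2-i}$ are distinct for $i\le\lfloor\frac{n}{2}\rfloor-2$ since then $i<\frac{n-2}{2}$) is also correct.
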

The definition of \sspsy{} uses seq-shellability, which itself is based on simple sequences.
\begin{definition}[Simple sequence]\emph{\cite{seqshellable}}
	Let $D$ be a good drawing of $K_n$, $F\in\mathcal{F}(D)$ and $v\in V$ with $v$ incident to $F$. 
	Furthermore, let $S_v = (u_0,\ldots,u_k)$ with $u_i\in V\setminus\{v\}$ be a sequence of distinct vertices.
	If $u_0$ is incident to $F$ and vertex $u_i$ is incident to a face containing $F$ in the subdrawing $D-\{u_0,\ldots, u_{i-1}\}$ for all $1\leq i \leq k$, then we call $S_v$ simple sequence of $v$.
\end{definition}
\begin{definition}[Seq-Shellability]\emph{\cite{seqshellable}}
	Let $D$ be a good drawing of $K_n$. We call $D$ \emph{$k$-seq-shellable} for $k\geq 0$ if there exists a face $F\in \mathcal{F}(D)$ and a sequence of distinct vertices $a_0,\ldots,a_k$ such that $a_0$ is incident to $F$, and
	\emph{(1.)} for each $i\in \{1,\ldots,k\}$, vertex $a_i$ is incident to the face containing $F$ in drawing $D-\{a_0,\ldots,a_{i-1}\}$, and
	\emph{(2.)} for each $i\in \{0,\ldots,k\}$, vertex $a_i$ has a simple sequence $S_i=(u_0,\ldots,u_{k-i})$ with $u_j\in V\setminus \{a_0,\ldots,a_i\}$ for $0\leq j \leq k-i $ in drawing $D-\{a_0,\ldots,a_{i-1}\}$. 
\end{definition}
If a drawing $D$ of $K_n$ is $(\lfloor\frac{n}{2}\rfloor-2)$-seq-shellable, we omit the $(\lfloor\frac{n}{2}\rfloor-2)$ part and say $D$ is seq-shellable. 
The class of seq-shellable drawings contains all drawings that are $(\lfloor\frac{n}{2}\rfloor-2)$-seq-shellable.
\begin{figure}
	\centering
	\includegraphics[width=0.8\linewidth]{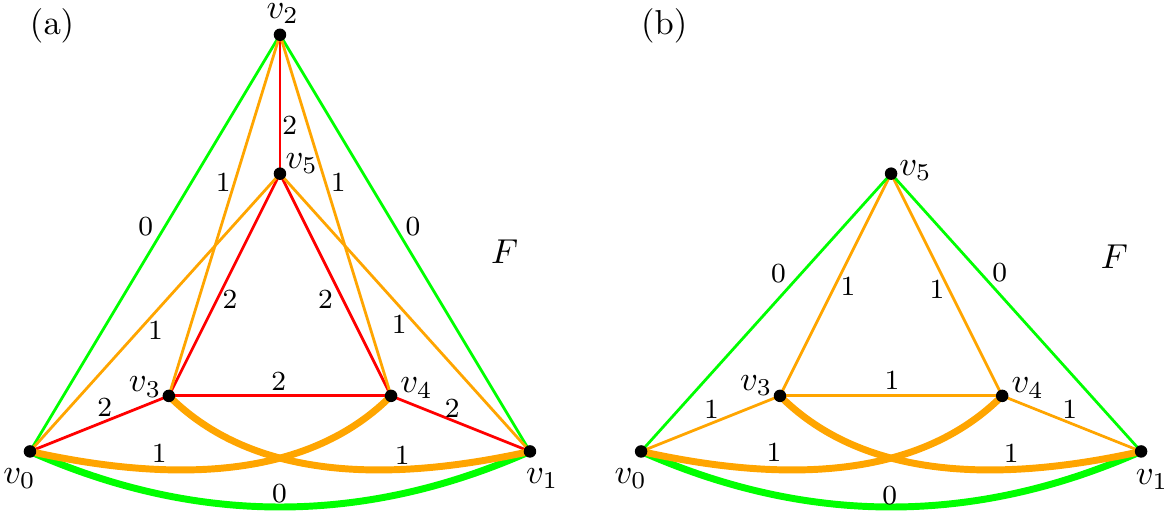}
	\caption{Example (a) shows a crossing optimal drawing $D$ of $K_6$ with the $k$-values at the edges. (b) shows the subdrawing $D-v_2$ and its $k$-values. The fat highlighted edges $v_0v_1$, $v_0v_4$ and $v_1v_3$ are invariant and keep their $k$-values. The reference face is the outer face $F$.}
	\label{fig:example}
\end{figure}
\section{Properties of Triple Cumulated $k$-Edges}
\label{sec:triplecumu}
In this section, we present new results for triple cumulated $k$-edges.
First, we introduce the triple cumulated value of edges incident to $v$.
Having a vertex $v$ incident to the reference face $F$, we know from Lemma \ref{lemma:vatf} that $v$ is incident to two $k$-edges for each $k\in \{0,\ldots, \lfloor\frac{n}{2}\rfloor-2\}$ and it follows that the triple cumulated number of $k$-edges incident to $v$ is 
$\tk{k}{D,v}=\sum_{i=0}^{k}{k+2-i \choose 2}\cdot 2 = 2{k+3 \choose 3}$.

Next, we introduce the \emph{double cumulated invariant} edges.
Consider removing a vertex $v \in V$ from a good drawing $D$ of $K_n$, resulting in the subdrawing $D-v$.  By deleting $v$ and its incident edges every remaining edge loses one triangle, i.e. for an edge $uw\in E$ there are only $(n-3)$ triangles $uwx$ with $x\in V\setminus\{u,v\}$ (instead of the $(n-2)$ triangles in drawing $D$). 
The $k$-value of any edge $e\in E$ is defined as the minimum number of $+$ or $-$ oriented triangles that contain $e$. If the lost triangle had the same orientation as the minority of triangles, the $k$-value of $e$ is reduced by one else it stays the same.
Therefore, every $k$-edge in $D$ with respect to $F\in \mathcal{F}(D)$ is either a $k$-edge or a $(k-1)$-edge in the subdrawing $D-v$ with respect to $F'\in \mathcal{F}(D-v)$ and $F\subseteq F'$.
We call an edge $e$ \emph{invariant} if $e$ has the same $k$-value with respect to $F$ in $D$ as for $F'$ in $D'$. 
See Figure \ref{fig:example} for an example.

For $0\leq k \leq \lfloor\frac{n}{2}\rfloor-1$ we denote the number of invariant $k$-edges between $D$ and $D'$ (with respect to $F$ and $F'$ respectively) with $I_k(D, D')$.
Furthermore, we define the \emph{double cumulated invariant $k$-value} as
\begin{align*}
\di{k}{D, D'} \coloneqq \sum_{j=0}^{k}\sum_{i=0}^{j}\si{i}{D, D'} = \sum_{i=0}^{k}(k-i+1)\si{i}{D, D'} \textrm{.}
\end{align*}
We define $\tk{-1}{D}\coloneqq 0$, and introduce the recursive representation for the triple cumulated $k$-edges.
\begin{lemma}\label{lemma:recursive_triple_cumu}
	Let $D$ be a good drawing of $K_n$, $v\in V$ and $F\in \mathcal{F}(D)$. With respect to the reference face $F$ and for all $k\in\{0,\ldots,\lfloor\frac{n}{2}\rfloor-2\}$, we have
	\begin{align*}
		\tk{k}{D} = \tk{k-1}{D-v}+ \tk{k}{D,v} + \bar{I}_{k}(D,D-v) \textrm{.}
	\end{align*}
\end{lemma}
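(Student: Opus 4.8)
The plan is to prove the identity by a weighted double count over the edges of $D$. By definition $\tk{k}{D}=\sum_{i=0}^{k}\binom{k+2-i}{2}\sk{i}{D}$, so every edge of $D$ whose $k$-value with respect to $F$ equals $i$ contributes exactly $\binom{k+2-i}{2}$ to $\tk{k}{D}$; throughout I use the convention $\binom{m}{2}=0$ for $m\le 1$, which lets edges of $k$-value larger than $k$ contribute nothing and removes any need for case distinctions on index boundaries. First I would split the edges of $D$ into the $n-1$ edges incident to $v$ and the $\binom{n-1}{2}$ edges of the subdrawing $D-v$ and treat the two groups separately. The edges incident to $v$ contribute, by the very definition of $\tk{k}{D,v}$ as the triple cumulated number of $i$-edges incident to $v$, exactly that quantity (which equals $2\binom{k+3}{3}$ when $v$ is incident to $F$, by Lemma~\ref{lemma:vatf}); this settles the middle term of the recursion.

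It then remains to show that the edges of $D-v$ contribute $\tk{k-1}{D-v}+\di{k}{D,D-v}$ to $\tk{k}{D}$. Here I would invoke the dichotomy recorded earlier in this section: writing $F'$ for the unique face of $D-v$ containing $F$, an edge $e$ of $D-v$ that is an $i$-edge of $D$ with respect to $F$ is, with respect to $F'$, either an invariant $i$-edge of $D-v$ or an $(i-1)$-edge of $D-v$; the latter forces $i\ge 1$, since a $0$-edge has no minority triangle that could be destroyed and is therefore always invariant. I would then compare, edge by edge, the contribution $\binom{k+2-i}{2}$ of such an $e$ to $\tk{k}{D}$ with its contribution to $\tk{k-1}{D-v}$. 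If $e$ is invariant it contributes $\binom{k+1-i}{2}$ to $\tk{k-1}{D-v}$, leaving an excess of $\binom{k+2-i}{2}-\binom{k+1-i}{2}=k+1-i$; and by the definition $\di{k}{D,D-v}=\sum_{i=0}^{k}(k-i+1)\si{i}{D,D-v}$ this is precisely the weight with which the invariant $i$-edge $e$ enters $\di{k}{D,D-v}$. If $e$ is not invariant it is an $(i-1)$-edge of $D-v$ and hence contributes $\binom{k+1-(i-1)}{2}=\binom{k+2-i}{2}$ to $\tk{k-1}{D-v}$, the same amount as to $\tk{k}{D}$, so its excess is $0$, matching the fact that a non-invariant edge does not appear in $\di{k}{D,D-v}$.

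Summing these excesses over all edges of $D-v$ would give that their total contribution to $\tk{k}{D}$ equals $\tk{k-1}{D-v}+\di{k}{D,D-v}$, and adding back the contribution $\tk{k}{D,v}$ of the edges at $v$ yields the claimed recursion. The argument is essentially bookkeeping: the only computation is the single Pascal identity $\binom{m}{2}-\binom{m-1}{2}=m-1$, and the only subtlety is the invariance dichotomy together with its boundary case $i=0$. Since that dichotomy has already been established in this section, I do not expect a genuine obstacle; the care required lies entirely in keeping the binomial index shifts and the $\binom{m}{2}=0$ convention straight.
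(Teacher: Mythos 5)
Your proposal is correct and follows essentially the same route as the paper's proof: the same decomposition into the contribution of edges at $v$, the contribution $\tk{k-1}{D-v}$ of the surviving edges, and the excess $\binom{k+2-i}{2}-\binom{k+1-i}{2}=k+1-i$ accumulated by the invariant edges, which sums to $\di{k}{D,D-v}$. Your write-up is in fact slightly more careful than the paper's, since it explicitly checks that a non-invariant $i$-edge contributes the same amount $\binom{k+2-i}{2}$ to both $\tk{k}{D}$ and $\tk{k-1}{D-v}$ and handles the boundary conventions.
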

\begin{proof}%\textbf{Proof of Lemma \ref{lemma:recursive_triple_cumu}.}
	We remove vertex $v$ from the drawing and lose all edges incident to $v$ and therefore also the contribution $\tk{k}{D, v}$. 
	After deleting $v$ every edge is only part of $(n-3)$ triangles in $D-v$ instead of the $(n-2)$ triangles in $D$. 
	Therefore, the $k$-value of an edge may change if the removed triangle accounted to the $k$-value of the edge.
	However, if the edge is invariant, we have to account for an additional contribution.
	The difference for an invariant $i$-edge in $D$ and $D-v$ is
	\[{k+2-i \choose 2}-{k+1-i \choose 2}=(k+1-i)\textrm{.}\]
	This means the contribution equals the double cumulated $k$-value of the $i$-edge, hence the total contribution of all invariant edges is exactly $\di{k}{D, D'} = \sum_{i=0}^{k}(k+1-i)\si{k}{D, D'}$.
	Summing up the terms for the contribution of $v$, the contribution of the invariant edges and the triple cumulated $(k-1)$-value of the drawing $D-v$ leads to the result.
\end{proof}
Using the triple cumulated value, we only have to ensure that
$\tk{k}{D}\geq 3{k+4 \choose 4}$ for $k=\frac{n-1}{2}-2$ if $n$ is odd, or for each $k\in\{\frac{n}{2}-2,\frac{n}{2}-3\}$ if $n$ is even in order to prove that $cr(D)\geq H(n)$ (Theorem \ref{theorem:triplecumubound}).
Mutzel and Oettershagen \cite{seqshellable} showed that for a seq-shellable drawing $D$ of $K_n$ we have  $\dk{i}{D} \geq 3{{i+3}\choose{3}}$ for all $i\in\{0,\ldots,k\}$ with respect to the reference face $F$. This implies the following corollary.
\begin{cor}\label{corollary:seqshell3cumu} 
	Let $D$ be a good drawing of $K_n$ and seq-shellable for a reference face $F\in \mathcal{F}(D)$, then $\tk{k}{D} \geq 3{k+4 \choose 4}$ for all $0\leq k\leq \lfloor \frac{n}{2} \rfloor -2$ with respect to $F$.
\end{cor}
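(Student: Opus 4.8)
The plan is to derive the triple cumulated bound directly from the double cumulated bound of Mutzel and Oettershagen by summation, using the defining relation between the two quantities. Recall from the definition that
\begin{align*}
\tk{k}{D} = \sum_{i=0}^{k}\dk{i}{D}\textrm{,}
\end{align*}
so it suffices to bound each summand on the right-hand side from below.

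First I would invoke the cited result: since $D$ is seq-shellable with respect to $F$, we have $\dk{i}{D}\geq 3\binom{i+3}{3}$ for every $i\in\{0,\ldots,\lfloor\frac n2\rfloor-2\}$ with respect to $F$; in particular this holds for every $i\in\{0,\ldots,k\}$ whenever $0\leq k\leq\lfloor\frac n2\rfloor-2$. Substituting these inequalities termwise into the displayed identity gives
\begin{align*}
\tk{k}{D}=\sum_{i=0}^{k}\dk{i}{D}\geq 3\sum_{i=0}^{k}\binom{i+3}{3}\textrm{.}
\end{align*}

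Next I would evaluate the remaining sum by the hockey-stick identity, $\sum_{i=0}^{k}\binom{i+3}{3}=\sum_{j=3}^{k+3}\binom{j}{3}=\binom{k+4}{4}$, which follows from iterated application of Pascal's rule (and which is the same combinatorial identity already used implicitly to pass from $\dk{}{}$ to $\tk{}{}$ in the definition). Combining the two displays yields $\tk{k}{D}\geq 3\binom{k+4}{4}$ for all $0\leq k\leq\lfloor\frac n2\rfloor-2$, as claimed.

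There is essentially no hard step here: the argument is a one-line summation plus a standard binomial identity. The only point that requires a moment's care is confirming that the Mutzel–Oettershagen lower bound is available for \emph{all} indices $i$ in the range $\{0,\ldots,k\}$ and not merely for $i=k$; this is exactly what their result provides for a seq-shellable drawing, so no additional work is needed. Thus the corollary is an immediate consequence of Theorem-level input already in hand.
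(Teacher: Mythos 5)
Your proposal is correct and matches the paper's (implicit) argument exactly: the paper states the double cumulated bound $\dk{i}{D}\geq 3\binom{i+3}{3}$ for all $i$ in range from \cite{seqshellable} and notes that the corollary follows, which is precisely your termwise summation combined with the hockey-stick identity $\sum_{i=0}^{k}\binom{i+3}{3}=\binom{k+4}{4}$. Nothing further is needed.
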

The following lemma gives a lower bound on double cumulated invariant edges incident to a vertex that touches the reference face.
\begin{lemma}\label{lemma:inv_vwatf_cumu2}
	Let $D$ be a good drawing of $K_n$ with two vertices $v$ and $w$ incident to the reference face $F\in\mathcal{F}(D)$.
	If $v$ is removed, the double cumulated value of invariant $k$-edges incident to $w$ with respect to $F$ is at least ${k+2\choose 2}$ for all  $k\in\{0,\ldots,\lfloor\frac{n}{2}\rfloor-2\}$.
\end{lemma}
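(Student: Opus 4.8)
The plan is to establish the stronger statement that for every $i\in\{0,\ldots,\lfloor\frac{n}{2}\rfloor-2\}$ at least one invariant $i$-edge is incident to $w$. Granting this, if $s_i$ denotes the number of invariant $i$-edges incident to $w$, then the double cumulated value of invariant $k$-edges incident to $w$ equals $\sum_{i=0}^{k}(k-i+1)s_i\geq\sum_{i=0}^{k}(k-i+1)={k+2\choose 2}$, which is what we want.

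To set things up, I would apply Lemma~\ref{lemma:vatf} to $w$ and $F$ in $D$: label the edges at $w$ counter-clockwise as $e_0=wa_0,\ldots,e_{n-2}=wa_{n-2}$ so that $e_0$ and $e_{n-2}$ are incident to $F$; then $e_\ell$ is a $k$-edge with $k=\min(\ell,n-2-\ell)$, so for $i\leq\lfloor\frac{n}{2}\rfloor-2$ the two $i$-edges incident to $w$ are exactly the (distinct) edges $e_i$ and $e_{n-2-i}$. Since $v$ is incident to $F$, it is the other endpoint of one of the edges at $w$, say $v=a_j$. I will compute the $k$-value of each $e_\ell$ with $\ell\neq j$ in $D-v$ directly from its triangles rather than reapplying Lemma~\ref{lemma:vatf}, because after deleting $v$ the superface $F'$ may be incident to $w$ along two arcs. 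The key point enabling this is that orientations of surviving triangles are unaffected: for $m\notin\{\ell,j\}$, the triangle $wa_\ell a_m$ is a simple closed curve built from three edges present in both $D$ and $D-v$, and both $F$ and $F'\supseteq F$, being faces disjoint from this curve, lie in a single one of the two discs it bounds, namely the disc containing $v$ (since $F$ is incident to $v$); hence $wa_\ell a_m$ has the same orientation with respect to $F$ in $D$ as with respect to $F'$ in $D-v$.

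Next, orient every $e_\ell$ away from $w$. A short angular argument at $w$ — using that the angular sector of $F$ at $w$ lies between $e_{n-2}$ and $e_0$ — shows that $wa_\ell a_m$ has orientation $+$ exactly when $m<\ell$; hence, among the $n-2$ triangles at $e_\ell$, there are $\ell$ of orientation $+$ and $n-2-\ell$ of orientation $-$. Deleting $v=a_j$ removes from $e_\ell$ precisely the triangle $wa_\ell a_j$, which is $+$ iff $j<\ell$ and $-$ iff $j>\ell$. If $\ell<n-2-\ell$ (so $e_\ell$ is an $\ell$-edge, whose majority orientation is $-$), then deleting $v$ keeps the $k$-value at $\ell$ when $j>\ell$ (a majority triangle is lost) and lowers it to $\ell-1$ when $j<\ell$; if $\ell>n-2-\ell$ (so $e_\ell$ is an $(n-2-\ell)$-edge, whose majority orientation is $+$), the $k$-value is kept when $j<\ell$ and lowered otherwise. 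In short, $e_\ell$ stays invariant iff $j>\ell$ in the first case and iff $j<\ell$ in the second.

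Finally, fix $i\leq\lfloor\frac{n}{2}\rfloor-2$; then $i<n-2-i$, so $e_i$ falls in the first case and $e_{n-2-i}$ in the second. If $j>i$ then $e_i$ is an invariant $i$-edge incident to $w$; if $j\leq i$, then $j\leq i<n-2-i$, so $j<n-2-i$ and $e_{n-2-i}$ is an invariant $i$-edge incident to $w$. (These two sub-cases also cover the boundary situations $j=i$ and $j=n-2-i$, in which one of the two edges has been deleted but the other remains and is invariant.) Thus $s_i\geq 1$ for all such $i$, which proves the lemma. The main obstacle in carrying this out is the orientation bookkeeping: one must (i) avoid reusing Lemma~\ref{lemma:vatf} in $D-v$, since $F'$ need not touch $w$ along a single arc, and instead argue that orientations are preserved for surviving triangles, and (ii) fix the edge orientations consistently (all away from $w$), so that the notions of majority orientation and of the lost triangle line up correctly in the two cases.
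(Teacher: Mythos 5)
Your proof is correct and takes essentially the same route as the paper: the paper also reduces the bound to the claim that for each $i$ at least one of the two $i$-edges $e_i$, $e_{n-2-i}$ at $w$ keeps its $k$-value, and then sums $\sum_{i=0}^{k}(k+1-i)={k+2\choose 2}$, citing the proof of Lemma~\ref{lemma:inv_vwatf} for that claim. Your explicit orientation bookkeeping is just a self-contained (and somewhat more careful) re-derivation of that cited argument.
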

We use the following statements for the proof of Lemma~\ref{lemma:inv_vwatf_cumu2}.
\begin{cor}\label{corollary:a0b0_j_edge}\emph{\cite{seqshellable}}
	Let $D$ be a good drawing of $K_n$, $F\in \mathcal{F}(D)$ and $u,v\in V$ with both $u$ and $v$ incident to $F$.
	If and only if $uv$ is a $j$-edge, there are exactly $j$ or $n-2-j$ vertices on the same side of $uv$ as the reference face $F$.
\end{cor}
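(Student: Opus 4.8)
The plan is to reduce the statement to a correspondence between the orientation of the triangles $uvw$ and the side of the edge $uv$ on which the vertex $w$ lies, where ``side'' is made precise by closing the arc $uv$ into a simple closed curve through the face $F$. Concretely, since $F$ is an open disc whose boundary passes through both $u$ and $v$, I would choose a simple arc $\gamma$ running inside $F$ from $u$ to $v$ and set $C \coloneqq uv \cup \gamma$. Because $\gamma$ lies in a face it meets no vertex and crosses no edge, so $C$ is a simple closed curve meeting the drawing only along $uv$; it divides $S^2$ into two regions, and every vertex $w \in V\setminus\{u,v\}$ lies in exactly one of them. I would then observe that the asserted equivalence is purely numerical once the side-counts are identified with triangle-orientation counts: if one region contains $c$ vertices, the other contains $n-2-c$, the $k$-value of $uv$ equals $\min(c,n-2-c)$, and this equals $j$ exactly when $c \in \{j, n-2-j\}$; hence both directions of the ``if and only if'' would follow simultaneously.

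The heart of the argument is the claim that, with the orientation of $uv$ fixed, the triangle $uvw$ has orientation $+$ if and only if $w$ lies in one particular, $w$-independent region of $C$. To prove it I would fix the direction $u\to v$ and record on which side of $uv$ the arc $\gamma$ departs; say $\gamma$ leaves on the left, so that the region of $C$ bordering $uv$ on the left is the ``$F$-side.'' For a given $w$, the edge $uv$, the arc $\gamma$, and the path $vw\cup wu$ form a theta-subgraph with the three arcs internally disjoint (adjacency at $u,v$ together with the fact that $\gamma$ lies in a face prevents any other intersection). This theta-subgraph cuts $S^2$ into three regions, one bordering each pair of arcs. I would then locate $F$ and $w$: since $F$ is connected, contains $\gamma$, and cannot cross the edges $uv$ or $vw,wu$, it is confined to the two regions flanking $\gamma$; and $w$ lies on the arc opposite $\gamma$. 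Reading off which side of the triangle $uvw$ these two $\gamma$-regions constitute, and comparing with the fixed left/right designation coming from $\gamma$, would yield that the orientation of $uvw$ is determined solely by whether $w$ fell into the left or the right region of $C$.

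Carrying out this read-off is the main obstacle, because it requires a careful, case-free bookkeeping of the cyclic order of the three theta-arcs around $u$. The subtlety is that whether the two $\gamma$-flanking regions lie to the left or to the right of $uv$ flips exactly according to whether $w$ sits on the $\gamma$-side of $uv$ or not, and one must verify that this flip is consistent for all $w$ rather than producing spurious extra cases. I expect that tracking the fixed curve $C$ (whose two sides $L_C,R_C$ do not depend on $w$) while only the third arc varies removes the apparent case split and gives the clean equivalence $w\in R_C \Leftrightarrow uvw$ has orientation $+$. Granting this, the number of $+$-triangles equals the number of vertices in $R_C$ and the number of $-$-triangles equals the number in $L_C$, so the two side-counts are $i$ and $n-2-i$ with $\min = j$. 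Finally, since $\gamma$ runs through $F$ the face $F$ meets both $L_C$ and $R_C$, so ``the side containing $F$'' is determined only up to the choice of side of $\gamma$; this is precisely why the statement reads ``$j$ or $n-2-j$,'' and with that caveat the count of vertices on the $F$-side of $uv$ is $j$ or $n-2-j$, establishing both implications.
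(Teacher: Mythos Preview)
The paper does not supply a proof of this corollary at all: it is quoted verbatim from \cite{seqshellable} and used as a black box in the proof of Lemma~\ref{lemma:inv_vwatf}. So there is no ``paper's own proof'' to compare your proposal against.

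That said, your plan is sound and is exactly how one makes the informal phrase ``side of $uv$'' precise. Two remarks. First, the bookkeeping you flag as the main obstacle can be cut down to one line: once you have the theta-graph with arcs $uv$, $\gamma$, and $P_w=uw\cup wv$, note that $\gamma$ lies on one side of the triangle $T_w=uv\cup P_w$ (it is internally disjoint from $T_w$), and $F$ lies on the same side since $F$ is connected, contains~$\gamma$, and meets $T_w$ only in its boundary. Hence the orientation of $uvw$ is $+$ iff $\gamma$ is to the left of $T_w$; but $T_w$ and $C$ share the oriented arc $uv$, so ``$\gamma$ to the left of $T_w$'' is equivalent to ``$P_w$ (hence $w$) to the right of $C$'', independently of $w$. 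No case split is needed. Second, your closing observation is exactly right: $\gamma$ splits $F$, so ``the side of $uv$ containing $F$'' is only well defined up to swapping the two sides of $C$, which is precisely the ``$j$ or $n-2-j$'' ambiguity in the statement.
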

\begin{lemma}\label{lemma:inv_vwatf}\emph{\cite{seqshellable}}
	Let $D$ be a good drawing of $K_n$, $F\in\mathcal{F}(D)$ and $v,w \in V$ with $v$ and $w$ incident to $F$. If we remove $v$ from $D$, then $w$ is incident to at least $\lfloor \frac{n}{2}\rfloor-1$ invariant edges.
\end{lemma}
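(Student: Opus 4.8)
The plan is to reduce the statement to a counting argument over the rotation of edges at $w$, using exactly the structure exposed by Lemma~\ref{lemma:vatf}. Since $w$ is incident to $F$, I label the $n-1$ edges incident to $w$ counterclockwise as $e_0,\dots,e_{n-2}$, with $e_0$ and $e_{n-2}$ bounding $F$ at $w$, and write $e_i = w u_i$; by Lemma~\ref{lemma:vatf} the edge $e_i$ is a $\min(i,n-2-i)$-edge. The edge $wv$ is one of these, say $wv=e_p$. Removing $v$ deletes exactly $e_p$ together with, for each surviving edge $e_i$ (that is, $i\neq p$), the single triangle $w u_i v$. So the task becomes: count for how many $i\neq p$ the edge $e_i$ keeps its $k$-value.

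The refinement I need — which is really the mechanism underlying Lemma~\ref{lemma:vatf} — is that, for fixed $i$, the orientation of the triangle $w u_i u_j$ with respect to $F$ is governed by the sign of $j-i$: the $n-2-i$ triangles with $j>i$ all put $F$ on one side, and the $i$ triangles with $j<i$ all put $F$ on the other. I would prove this topologically. Near $w$, the face $F$ occupies precisely the angular wedge between $e_{n-2}$ and $e_0$, while the triangle $w u_i u_j$ occupies the wedge between the rays $w u_i$ and $w u_j$. As $j$ runs through $\{i+1,\dots,n-2\}$ the ray $w u_j$ sweeps monotonically inside the fan and never enters $F$'s wedge, so $F$ stays on a fixed side of the triangle; symmetrically for $j<i$, and the side flips as $j$ crosses $i$. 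Because $F$ is a single face met by no edge, its local side at $w$ already fixes its global side of the closed triangle curve, making the claim rigorous. Tallying $+$ against $-$ recovers the $k$-value $\min(i,n-2-i)$, confirming consistency with Lemma~\ref{lemma:vatf}.

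With this in hand the invariance test is explicit. As recalled in the paragraph preceding the lemma, $e_i$ is invariant under deletion of $v$ exactly when the removed triangle $w u_i v$ carries the strict majority orientation of $e_i$. Writing $\mu=\tfrac{n-2}{2}$, the majority is the ``$j>i$'' orientation when $i<\mu$ and the ``$j<i$'' orientation when $i>\mu$, with a tie (hence never invariant) at $i=\mu$ for $n$ even. Since the orientation of $w u_i v = w u_i u_p$ is controlled by the sign of $p-i$, I conclude: for $i<\mu$, the edge $e_i$ is invariant iff $p>i$; for $i>\mu$, iff $p<i$. A short case analysis on whether $p$ lies below, at, or above $\mu$ then shows the number of invariant edges at $w$ is minimized when $v$ is extreme in the rotation (that is, $p\in\{0,n-2\}$, equivalently when $wv$ bounds $F$), where it equals $\lfloor n/2\rfloor$ for $n$ odd and $\lfloor n/2\rfloor-1$ for $n$ even. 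In every case this is at least $\lfloor n/2\rfloor-1$, which is the claim; the bound is tight only in the even case.

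I expect the only genuine difficulty to be making the orientation-by-rotation claim airtight, namely the monotonicity step that $F$ remains on a fixed side of $w u_i u_j$ as the third vertex sweeps through one half of the fan at $w$; everything afterwards is a routine count. I would also note that this argument only uses that $w$ is incident to $F$, so the hypothesis $v\in\partial F$ is not actually needed for this particular bound, although it matches the context in which the lemma is later applied.
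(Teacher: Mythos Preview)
Your argument is correct and shares the paper's core mechanism: use the fan structure at $w$ from Lemma~\ref{lemma:vatf} to decide, for each $e_i$, whether the deleted triangle $wu_iv$ carries the majority orientation. The paper organises the count differently. It first treats the case that $vw$ itself bounds $F$ (so $p\in\{0,n-2\}$), where one entire half of the fan is invariant, and then, when $vw$ is a $j$-edge away from $\partial F$, invokes Corollary~\ref{corollary:a0b0_j_edge} to pin down $j$ further invariant edges among the $u_h$ lying on the $F$-side of $vw$; it is precisely this appeal to Corollary~\ref{corollary:a0b0_j_edge} that consumes the hypothesis $v\in\partial F$. You instead make the underlying ``orientation depends only on $\mathrm{sign}(j-i)$'' claim explicit and run a direct case split on the rotational position $p$ of $v$, which is tidier and, as you correctly observe, never uses that $v$ touches $F$. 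The one step that needs care is the one you flag: that the local wedge at $w$ already determines on which side of the closed triangle $wu_iu_j$ the face $F$ lies. In a good drawing the three edges of this triangle are pairwise adjacent and hence non-crossing, so the triangle is a simple closed curve, $F$ is connected, and your local-to-global argument is sound.
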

\begin{proof}
	We label the edges incident to $w$ counter clockwise with $e_0,\ldots,e_{n-2}$ such that $e_0$ and $e_{n-2}$ are incident to the face $F$, 
	and we label the vertex at the other end of $e_i$ with $u_i$. 
	Furthermore, we orient all edges incident to $w$ as outgoing edges.
	Due to Lemma \ref{lemma:vatf} we know that $w$ has two $i$-edges for $0\leq i \leq\lfloor \frac{n}{2}\rfloor -2$. Edge $e_i$ obtains its $i$-value from the minimum of say $+$ oriented triangles and edge $e_{n-2-i}$ obtains its $i$-value from the minimum $-$ oriented triangles (or vice versa).
	Assume that $vw$ is incident to $F$, i.e. $vw$ is a $0$-edge and all triangles $vwu$ for $u\in V\setminus\{v,w\}$ have the same orientation. 
	Consequently, all $e_i$ or all $e_{n-2-i}$ for $0\leq i \leq \lfloor\frac{n}{2}\rfloor-2$ are invariant.
	In the case that $vw$ is not incident to $F$ and is a $j$-edge, there are $j$ triangles $vwu_h$ with $u_h\in V\setminus\{v,w\}$, $0\leq h\leq j-1$ or $n-1-j \leq h\leq n-2$ 
	and $u_h$ is on the same side of $vw$ as $F$ (Corollary \ref{corollary:a0b0_j_edge}). 
	This means, each triangle $wu_hv$ is part of the majority of orientations for the $k$-value of edge $wu_h$, therefore removing $v$ does not change the $k$-value and 	
	there are $j$ additional invariant edges incident to $w$ if we remove $v$.
\end{proof}
\begin{refproof}\textbf{Proof of Lemma \ref{lemma:inv_vwatf_cumu2}.}
	Let $D$ be a good drawing of $K_n$ with vertices $v$ and $w$ incident to the reference face $F\in\mathcal{F}(D)$.
	From the proof of Lemma \ref{lemma:inv_vwatf} follows that if $v$ is removed, vertex $w$ has at least one invariant $(\leq i)$-edge for $0\leq i\leq k$ with respect to $F$.
	Summing up amounts to at least \[\sum_{i=0}^{k}(k+1-i)={k+2 \choose 2}\]
	for all  $k\in\{0,\ldots,\lfloor\frac{n}{2}\rfloor-2\}$.
\end{refproof}
The following lemma is the gist that allows us to locally change the reference face if we have an odd number of vertices.
\begin{lemma}\label{lemma:invarianteinvariant}
	Let $D$ be a good drawing of $K_n$ and $v \in V$.
	For $n$ odd, the value of the double cumulated invariant edges $\di{\mm}{D,D-v}$ is the same with respect to any face incident to $v$ in $D$ and the superface $f(v)$ in $D-v$.
\end{lemma}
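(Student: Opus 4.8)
The plan is to show that the quantity $\di{\mm}{D,D-v}$ does not depend on which face incident to $v$ we pick as the reference face $F$, while the superface $f(v)$ is automatically determined once $v$ is removed (it is the unique face of $D-v$ containing all faces of $D$ incident to $v$, so in particular it contains $F$ regardless of the choice). The key is Theorem \ref{theorem:triplecumubound} together with the recursive identity of Lemma \ref{lemma:recursive_triple_cumu}. For $n$ odd, Theorem \ref{theorem:triplecumubound} tells us that $\tk{m}{D}$ is the same for every reference face of $D$ (with $m=\mm$), and likewise $\tk{m-1}{D-v}$ is the same for every reference face of $D-v$ (here $D-v$ has $n-1$ vertices, which is even, so one must be slightly careful — see below). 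Rearranging Lemma \ref{lemma:recursive_triple_cumu} at level $k=m$ gives
\begin{align*}
\di{m}{D,D-v} = \tk{m}{D} - \tk{m-1}{D-v} - \tk{m}{D,v}\textrm{,}
\end{align*}
and $\tk{m}{D,v} = 2\binom{m+3}{3}$ is a fixed constant (shown right after Corollary \ref{corollary:seqshell3cumu}, using Lemma \ref{lemma:vatf}, valid because $v$ is incident to $F$). So it suffices to argue the two triple-cumulated terms on the right are invariant.

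First I would handle $\tk{m}{D}$: since $n$ is odd, Theorem \ref{theorem:triplecumubound} gives $cr(D) = 2\tk{m}{D} - \frac18 n(n-1)(n-3)$ for every reference face, hence $\tk{m}{D}$ is the same for all faces of $D$, in particular for all faces incident to $v$. The subtle part is $\tk{m-1}{D-v}$, because $D-v$ is a drawing of $K_{n-1}$ with $n-1$ even, and for even order Theorem \ref{theorem:triplecumubound} only asserts that the \emph{sum} $\tk{m'}{D-v}+\tk{m'-1}{D-v}$ is face-independent, where $m' = \lfloor\frac{n-1}{2}\rfloor - 2 = \frac{n-1}{2} - 2 = m$ (using $n$ odd). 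So Theorem \ref{theorem:triplecumubound} directly gives that $\tk{m}{D-v} + \tk{m-1}{D-v}$ is the same for all faces of $D-v$, but not each summand separately. I would resolve this by also invoking Lemma \ref{lemma:recursive_triple_cumu} at level $k=m-1$, or more cleanly by combining the two levels: observe that the index $m-1 = \frac{n-1}{2}-3 = \mmm$, and that in $D-v$ the relevant reference face is always the superface $f(v)$, which is a single fixed face — so $\tk{m-1}{D-v}$ is literally evaluated at one and the same face no matter which $F \ni v$ we started from. That is the real point: removing $v$ collapses all the candidate reference faces incident to $v$ into the one superface $f(v)$, so the $D-v$ term is manifestly independent of the choice, and only the $D$ term needs the odd-$n$ uniqueness from Theorem \ref{theorem:triplecumubound}.

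Putting it together: the right-hand side $\tk{m}{D} - \tk{m-1}{D-v} - 2\binom{m+3}{3}$ has its first term independent of the choice of $F\ni v$ by Theorem \ref{theorem:triplecumubound} (for $n$ odd), its second term independent because it is always taken at the superface $f(v)$, and its third term a fixed constant; therefore $\di{m}{D,D-v}$ is independent of the choice of reference face incident to $v$, which is exactly the claim. The main obstacle I anticipate is making the $\tk{m-1}{D-v}$ step airtight: one has to be sure that for \emph{every} admissible choice of $F\ni v$ in $D$, the resulting reference face in $D-v$ is genuinely the same superface $f(v)$ (this follows from the definition of $f(v)$ as the unique face of $D-v$ containing every face of $D$ incident to $v$, together with the fact established before ``invariant'' that $F\subseteq F'$ forces $F' = f(v)$), and to double-check the index bookkeeping $m = \lfloor\frac{n-1}{2}\rfloor-2 = \mm$ when $n$ is odd so that Lemma \ref{lemma:recursive_triple_cumu} is applied at a legal value of $k$.
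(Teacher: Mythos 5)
Your proposal is correct and follows essentially the same route as the paper's own proof: rearrange Lemma \ref{lemma:recursive_triple_cumu} to isolate $\di{\mm}{D,D-v}$, then note that $\tk{\mm}{D}$ is face-independent for $n$ odd (Theorem \ref{theorem:triplecumubound}), that $\tk{\mmm}{D-v}$ is always evaluated at the single fixed superface $f(v)$, and that $\tk{\mm}{D,v}=2\binom{\lfloor n/2\rfloor+1}{3}$ is constant for any face incident to $v$. Your extra care about the even order of $D-v$ resolves exactly in the way the paper implicitly relies on, so there is nothing to add.
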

\begin{proof}
	Let $m=\mm$.
	With Lemma \ref{lemma:recursive_triple_cumu} follows that with respect to a face incident to $v$ 
	\begin{align*}
		\di{m}{D,D-v}=&\tk{m}{D}-\tk{m-1}{D-v}-\tk{m}{D, v}\text{.}
	\end{align*}	
	$\tk{m}{D}$ is the same for all faces of $D$, the value $\tk{m-1}{D-v}$ with respect to face $f(v)$ is fixed and for each face incident to $v$ we have $\tk{m}{D,v}=2{ m+3 \choose 3}\text{.}$	Therefore, it follows that also the value of $\di{m}{D,D-v}$ has to be the same for every face incident to $v$.
\end{proof}
\section{\sspsyG}\label{sec:pairseqshell}
Basis for the new class of \ssps{} drawings are pair-sequences.
\begin{definition}[Pair-sequence]
	Let $D$ be a good drawing of $K_n$ and $v\in V$. Furthermore, let
	$P_v=(u_0,\ldots ,u_{\lfloor\frac{n}{2}\rfloor-2})$ be a sequence of distinct vertices $u_i\in V\setminus\{v\}$ for $0\leq i\leq \lfloor\frac{n}{2}\rfloor-2$.
	
%	We call $P_v$ pair-sequence of $v$ if
%	for $(n-i)$ odd the vertex $u_i$ in the drawing $D-\{u_0,\ldots,u_{i-1}\}$ is incident to a face $F\in\mathcal{F}(D-\{u_0,\ldots,u_{i-1}\})$, where $F$ is also incident to $v$, and in the drawing $D-\{u_0,\ldots,u_i\}$ vertex $u_{i+1}$ is incident to face $f(u_i)$. % for all $0\leq i\leq u_{\lfloor\frac{n}{2}\rfloor-2}$.
%	In the case $n$ even, vertex $u_0$ is incident to $F\in\mathcal{F}(D)$, where $F$ is also incident \mbox{to $v$.}
	We call $P_v$ pair-sequence of $v$ if for \mbox{$j\in \{1,\ldots, \lfloor\frac{n}{2}\rfloor-3\}$} and %$1\leq j\leq \lfloor\frac{n}{2}\rfloor-2$ and
	$(n-j)$ odd, the vertex $u_j$ in the drawing $D-\{u_0,\ldots,u_{j-1}\}$ is incident to a face $F\in\mathcal{F}(D-\{u_0,\ldots,u_{j-1}\})$, where $F$ is also incident to $v$, and in the drawing $D-\{u_0,\ldots,u_j\}$ vertex $u_{j+1}$ is incident to face $f(u_j)$, and vertex $u_0$ is incident to $F\in\mathcal{F}(D)$, where $F$ is also incident \mbox{to $v$.}
\end{definition}
%
%For example, in Figure \ref{fig:k11_sss} vertex $v$ in the drawing of $K_{11}$ has the pair-sequence $(u_0,u_1,u_2,u_3)$.
%The pair-sequence of vertex $v$ ensures that if we remove $v$ from $D$, there are enough double cumulated invariant k-edges, such that we can apply Lemma~\ref{lemma:recursive_triple_cumu}. 
For example, in Figure \ref{fig:k11_sss} vertex $v$ in the drawing of $K_{11}$ has the pair-sequence $(u_0,u_1,u_2,u_3)$.
The pair-sequence of vertex $v$ ensures that if we remove $v$ from $D$, there are enough double cumulated invariant $k$-edges.
Therefore, we are able to guarantee a lower bound on $\tk{\mm}{D}$ using Lemma~\ref{lemma:recursive_triple_cumu}.
\begin{lemma}\label{lemma:main2_cumuplus_new}
	Let $D$ be a good drawing of $K_n$, \mbox{$v\in V$} and $(u_0,\ldots ,u_{\lfloor\frac{n}{2}\rfloor-2})$ a pair-sequence of $v$, then $\di{\mm}{D,D-v} \geq {\lfloor\frac{n}{2}\rfloor+1 \choose 3}$. 
\end{lemma}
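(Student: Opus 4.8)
## Proof Proposal

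The plan is to exploit the recursive structure of the pair-sequence together with Lemma~\ref{lemma:recursive_triple_cumu} and the invariant-edge bounds from Section~\ref{sec:triplecumu}. Write $m = \mm$. The key observation is that a pair-sequence processes vertices \emph{in pairs}: for each odd index step $j$ (with $(n-j)$ odd), the vertex $u_j$ touches a face $F$ also incident to $v$, and then $u_{j+1}$ touches the superface $f(u_j)$. This is exactly the configuration in which Lemma~\ref{lemma:invarianteinvariant} lets us freely relocate the reference face to a face incident to $v$, and in which we can apply the two-vertices-at-the-reference-face bound of Lemma~\ref{lemma:inv_vwatf_cumu2}. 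So the first step is to unwind the definition and identify, for each pair $(u_j, u_{j+1})$ with $u_0$ as the base case, a situation where two named vertices are simultaneously incident to a common face.

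The main body of the argument is an induction on $n$ (equivalently, on the length of the pair-sequence), reducing the statement for $K_n$ to the statement for $K_{n-2}$ obtained by deleting the first two vertices $u_0, u_1$ of the pair-sequence. In the inductive step I would: (1) use the fact that $u_0$ and $v$ are both incident to a face $F$ of $D$, so by Lemma~\ref{lemma:inv_vwatf_cumu2} removing $u_0$ leaves $v$ incident to invariant $k$-edges with double cumulated value at least $\binom{k+2}{2}$; (2) change the reference face to a face incident to $v$ and then to $u_1$ (using that $u_1$ touches $f(u_0)$, the superface of $u_0$, and the invariance guaranteed by Lemma~\ref{lemma:invarianteinvariant} for $n$ odd, or the direct incidence condition in the pair-sequence definition), and again invoke Lemma~\ref{lemma:inv_vwatf_cumu2} to collect another batch of invariant $k$-edges; (3) observe that the tail $(u_2,\ldots,u_{\lfloor\frac{n}{2}\rfloor-2})$ is a pair-sequence of $v$ in $D - \{u_0,u_1\}$ (a drawing of $K_{n-2}$), so the inductive hypothesis gives $\bar I_{\lfloor\frac{n-2}{2}\rfloor-2}(D-\{u_0,u_1\},\, D-\{u_0,u_1,v\}) \ge \binom{\lfloor\frac{n-2}{2}\rfloor+1}{3}$; and (4) combine the contributions, tracking how invariance is preserved through the successive deletions, so that the collected bounds telescope to $\sum$ of binomials equal to $\binom{\lfloor\frac{n}{2}\rfloor+1}{3}$ via the hockey-stick identity $\sum_{i}\binom{i+2}{2} = \binom{i+3}{3}$.

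The delicate point — and the step I expect to be the main obstacle — is (4): carefully justifying that invariant edges counted at an early deletion \emph{remain invariant} (or are correctly re-accounted) through the later deletions in the sequence, and that the reference-face relocations performed along the way are all legitimate, i.e. each new reference face is genuinely incident to the vertices we need. This requires the order of operations in the pair-sequence to match exactly the hypotheses of Lemma~\ref{lemma:invarianteinvariant} (which needs $n$ odd) and Lemma~\ref{lemma:inv_vwatf_cumu2} (which needs two vertices at a common face), and it is why the definition of pair-sequence imposes the parity condition $(n-j)$ odd at precisely the right steps. I would handle this by setting up the induction so that at each pair-step the ambient drawing has an odd number of vertices, making Lemma~\ref{lemma:invarianteinvariant} available to certify that the double cumulated invariant value does not depend on which face incident to $v$ we chose; this decouples the face-choice bookkeeping from the counting.

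Finally, to close the base case: when the pair-sequence has reached its last (one or two) vertices, the drawing $D - \{u_0,\ldots\}$ is a small complete graph for which the required bound $\bar I_{0}(\cdot,\cdot) \ge \binom{\cdot}{3}$ (which may be $0$) holds trivially, and Lemma~\ref{lemma:inv_vwatf_cumu2} supplies the $\binom{2}{2}=1$ contribution from the last applicable vertex. Assembling the pieces and simplifying the resulting binomial sum yields $\di{\mm}{D,D-v} \ge \binom{\lfloor\frac{n}{2}\rfloor+1}{3}$, as claimed.
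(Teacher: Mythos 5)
Your overall strategy is the paper's: walk through the pair-sequence two vertices at a time, apply Lemma~\ref{lemma:inv_vwatf_cumu2} to the vertex currently touching the reference face, use Lemma~\ref{lemma:invarianteinvariant} to relocate the reference face to another face incident to $v$ exactly when the ambient subdrawing has an odd number of vertices, and sum the contributions $\binom{m-i+2}{2}$ via the hockey-stick identity to get $\binom{m+3}{3}=\binom{\lfloor n/2\rfloor+1}{3}$. Packaging this as an induction that strips off the pair $\{u_0,u_1\}$ and recurses on $K_{n-2}$ is only a cosmetic difference from the paper's direct iteration, and your remarks about where the parity condition $(n-j)$ odd is needed match the paper's use of Lemma~\ref{lemma:invarianteinvariant}.

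There is, however, one concrete error in step (1): you remove $u_0$ and count invariant $k$-edges incident to $v$. That is the wrong direction. The quantity to be bounded, $\di{\mm}{D,D-v}$, only counts edges that survive into $D-v$, i.e.\ edges not incident to $v$; edges incident to $v$ are deleted outright and cannot be invariant. The correct application of Lemma~\ref{lemma:inv_vwatf_cumu2} (and what the paper does) is: remove $v$ and count the invariant edges incident to $u_0$, which gives double cumulated value at least $\binom{k+2}{2}$ for $0\le k\le m$. The same role assignment must be kept for every $u_i$. A second, smaller imprecision is in step (2): when you reach $u_1$ the drawing $D-u_0$ has an even number of vertices, so Lemma~\ref{lemma:invarianteinvariant} is not available there; $u_1$ must be handled with the \emph{same} reference face $F$, using that $u_1$ is incident to the superface $f(u_0)\supseteq F$ in $D-u_0$. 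The reference face may only be changed after a complete pair has been removed, when the remaining subdrawing is again odd. With these two corrections your argument coincides with the paper's proof.
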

\begin{proof}
	Without loss of generality let $n$ be odd and let $m=\frac{n-1}{2}-2$ %TODO comment for  $n$ even
	(for $n$ even we can proceed similarly and start with $m=\frac{n}{2}-2$).
	Lemma \ref{lemma:inv_vwatf_cumu2} states that the double cumulated value of invariant edges incident to $u_0$ equals 
	${k+2\choose 2}$ for $0\leq k\leq m$ with respect to a face $F$ incident to $v$ and $u_0$, and the removal of $v$ from $D$.
	Likewise, the double cumulated value of invariant edges incident to $u_1$ is at least ${k+2\choose 2}$ for $0\leq k\leq m-1$ if we remove $v$ from $D-u_0$ with respect to $F$.
	The edge $u_0u_1$ may be invariant or non-invariant in $D$ with respect to removing $v$.
	Now consider the drawing $D-\{u_0,u_1\}$ with $n-2$ vertices and $\frac{n-3}{2}-2=\frac{n-1}{2}-3=m-1$.
	Because $n-2$ is odd, we know that for all faces incident to $v$ the value of $\di{m-1}{D-\{u_0,u_1\},D-\{v,u_0,u_1\}}$ is the same (Lemma \ref{lemma:invarianteinvariant}).
	We may select a new reference face $F'$, such that $v$ and $u_3$ are incident to $F'$, and we can argue again, using Lemma \ref{lemma:inv_vwatf_cumu2}, that removing $v$ leads to at least ${k+2\choose 2}$ for $0\leq k\leq m-2$ double cumulated value of invariant edges incident to $u_2$, since $u_2$ is incident to $F'$. The double cumulated value of invariant edges incident to $u_3$ is at least ${k+2\choose 2}$ for $0\leq k\leq m-3$ with respect to $F'$ if we remove $v$ from $D-\{u_0,u_1,u_2\}$. 
	Again, the edge $u_2u_3$ may be invariant or non-invariant in $D-\{u_0,u_1\}$ with respect to removing $v$.
		
	In general, we are able to change the reference face incident to $v$ if a subdrawing $K_r$ of $K_n$ with $0< r \leq n$ has an odd number of vertices because the number of double cumulated invariant $(\lfloor\frac{r}{2}\rfloor-2)$-edges does not change (see Lemma \ref{lemma:invarianteinvariant}).
	Furthermore, since vertex $u_i$ for $0\leq i \leq \lfloor\frac{n}{2}\rfloor-2$ is incident to the (current) reference face,
	$u_i$ contributes at least ${m-i+2 \choose 2}$ to the value of the double cumulated invariant $m$-value with respect to removing $v$ from $D$. Thus,
	$\di{m}{D,D-v}\geq \sum_{i=1}^{m+2}{i \choose 2}={m+3 \choose 3}$.
\end{proof}
\begin{figure}
	\centering
	\includegraphics[width=0.6\linewidth]{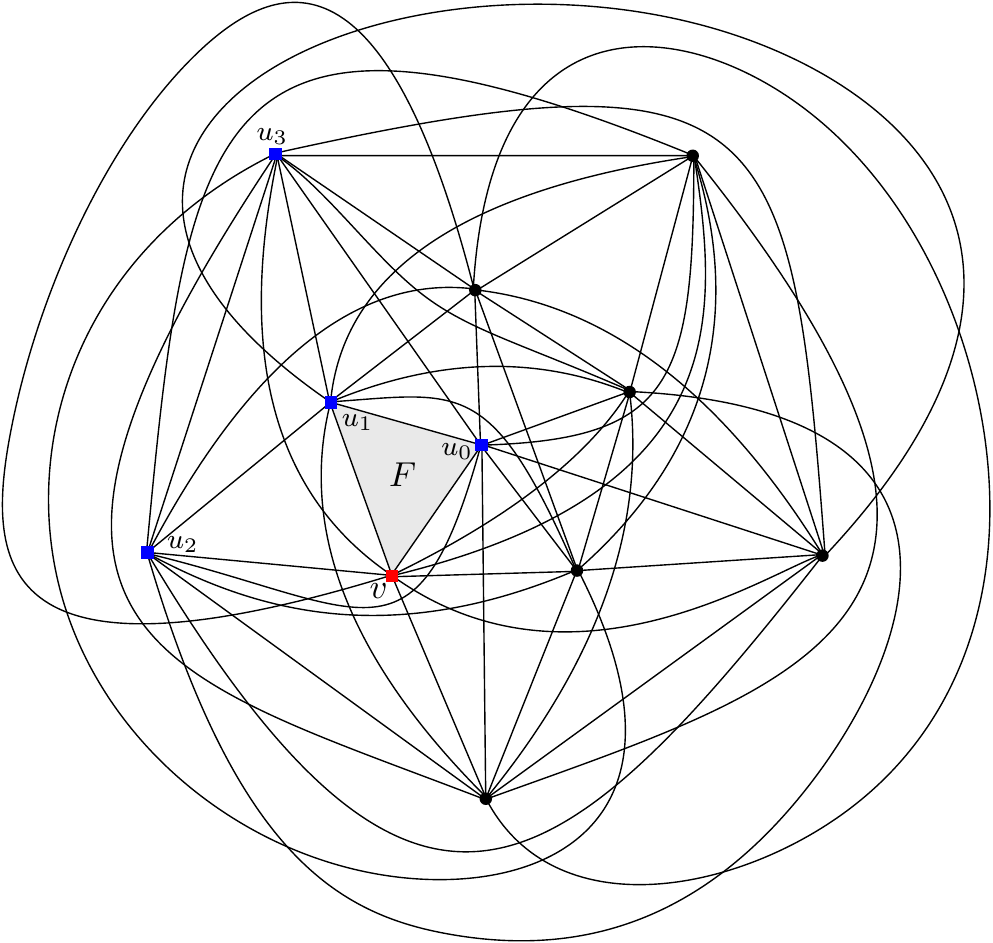}
	\caption{Single-pair-seq-shellable drawing of $K_{11}$. The reference face is $F$, vertex $v$ has the pair-sequence $(u_0,u_1,u_2,u_3)$.}
	\label{fig:k11_sss}
	\vspace{5mm}	
	\includegraphics[width=0.6\linewidth]{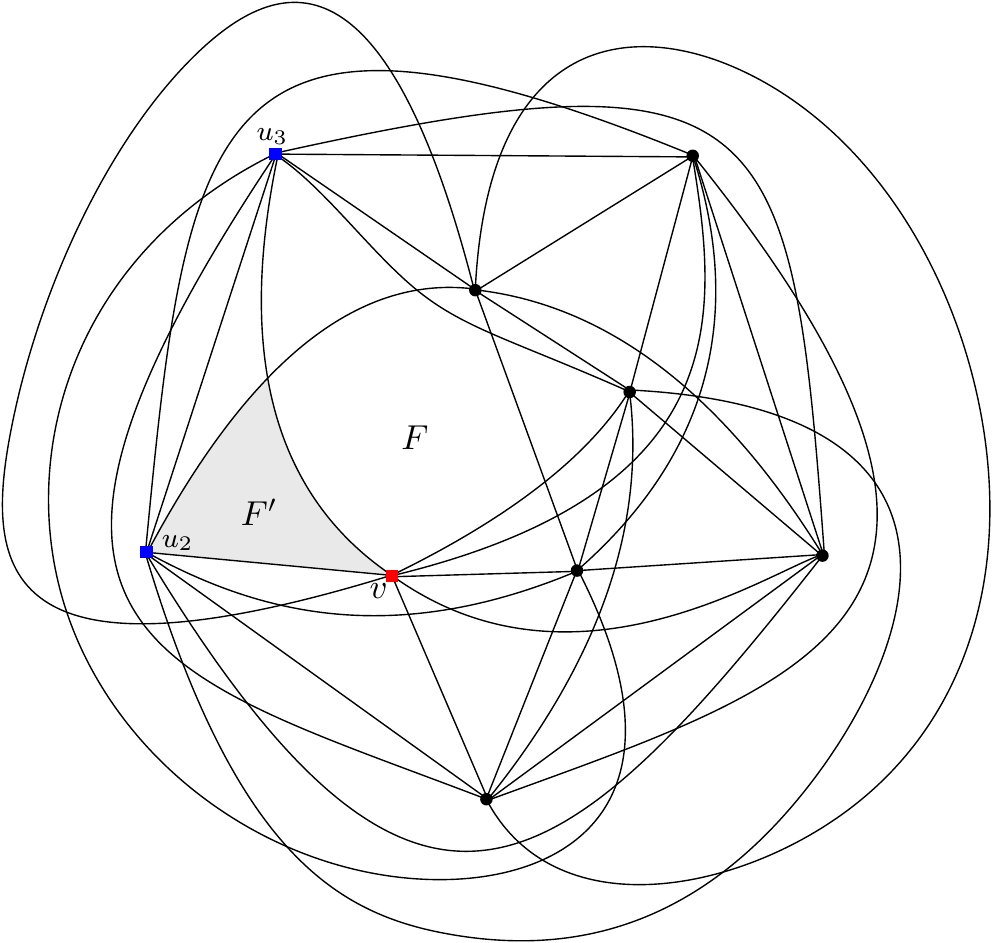}
	\caption{Subdrawing $D-\{u_0,u_1\}$ of the drawing shown in Figure \ref{fig:k11_sss}. The reference face is now $F'$, which is incident to $v$ and $u_2$.}
	\label{fig:k11_sss_2}	
\end{figure}
In Figure \ref{fig:k11_sss}, both vertices $u_0$ and $u_1$ are incident to the initial reference face $F$.
Figure \ref{fig:k11_sss_2} shows the drawing after removing the first pair (i.e. $u_0$ and $u_1$). The face $F$ is not incident to any vertex except $v$.
Changing the reference face to $F'$ allows to proceed with $u_2$ and $u_3$.

Notice, that in a drawing $D$ of $K_n$ with $n$ odd, only the value of $\di{\mm}{D,D-v}$ is invariant with respect to changing the reference face.
The values $\di{k}{D,D-v}$ for $k\in \{0,\ldots,\mmm\}$ may change when selecting a different reference face. %for different faces of $D$.
\begin{lemma}\label{lemma:abi_generalized_01}
	Let $D$ be a good drawing of $K_n$ with $n$ odd and $v\in V$. 
	If $v$ has a pair-sequence and for the subdrawing $D-v$ we have $\tk{\mmm}{D-v}\geq 3{\lfloor\frac{n}{2}\rfloor+1 \choose 4}$ with respect to $f(v)$, then $cr(D)\geq H(n)$.
\end{lemma}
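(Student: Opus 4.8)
The plan is to reduce the statement to the odd-$n$ criterion of Corollary~\ref{corollary:triple_cumu_lower_bound} by establishing the bound $\tk{\mm}{D}\geq 3\binom{\lfloor n/2\rfloor+2}{4}$ with respect to a well-chosen reference face. Write $m=\mm$; since $n$ is odd, $\mmm=m-1$ and $\lfloor n/2\rfloor=m+2$. By the definition of a pair-sequence, the first vertex $u_0$ of the pair-sequence of $v$ is incident to a face $F\in\mathcal{F}(D)$ that is also incident to $v$. Fix this $F$ as the reference face for $D$. Because $v$ is incident to $F$, the superface $f(v)$ of $v$ in $D-v$ is exactly the face of $D-v$ that contains $F$, so the hypothesis $\tk{\mmm}{D-v}\geq 3\binom{\lfloor n/2\rfloor+1}{4}$ (stated with respect to $f(v)$) is consistent with the recursion to follow.

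Next I would apply Lemma~\ref{lemma:recursive_triple_cumu} with the parameter $k=m$ and the vertex $v$, obtaining
\[
\tk{m}{D}=\tk{m-1}{D-v}+\tk{m}{D,v}+\di{m}{D,D-v},
\]
and then bound the three summands separately, all with respect to $F$ (and $f(v)$ for the subdrawing). The first term satisfies $\tk{m-1}{D-v}\geq 3\binom{m+3}{4}$ by hypothesis, since $\mmm=m-1$ and $\lfloor n/2\rfloor+1=m+3$. The second term equals $\tk{m}{D,v}=2\binom{m+3}{3}$ by the remark following the definition of triple cumulated $k$-edges, because $v$ is incident to the reference face $F$. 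The third term satisfies $\di{m}{D,D-v}\geq\binom{m+3}{3}$ by Lemma~\ref{lemma:main2_cumuplus_new}, which applies precisely because $v$ has a pair-sequence (again using $\lfloor n/2\rfloor+1=m+3$).

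Adding these three bounds and using Pascal's identity $\binom{m+3}{4}+\binom{m+3}{3}=\binom{m+4}{4}$ gives
\[
\tk{m}{D}\ \geq\ 3\binom{m+3}{4}+3\binom{m+3}{3}\ =\ 3\binom{m+4}{4}\ =\ 3\binom{\lfloor n/2\rfloor+2}{4},
\]
which is exactly the odd-$n$ hypothesis of Corollary~\ref{corollary:triple_cumu_lower_bound}; hence $cr(D)\geq H(n)$. The argument is short once the ingredients are assembled, so I do not expect a real obstacle; the only point needing care is the reference-face bookkeeping — checking that the face $F$ provided by the pair-sequence is simultaneously a legitimate reference face for $D$ with $v$ on its boundary (so the triple-cumulated remark applies to $v$), that $f(v)$ is the face of $D-v$ containing $F$ (so the hypothesis on $D-v$ meshes with Lemma~\ref{lemma:recursive_triple_cumu}), and that Lemma~\ref{lemma:main2_cumuplus_new} indeed yields its bound with respect to this same $F$. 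A minor sanity check is the small-$n$ boundary (e.g. $\mmm=-1$ for $n=5$), which is harmless since $\tk{-1}{\cdot}:=0$ and $\binom{m+3}{4}=0$ in that range.
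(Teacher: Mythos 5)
Your proposal is correct and follows essentially the same route as the paper's proof: decompose $\tk{\mm}{D}$ via Lemma~\ref{lemma:recursive_triple_cumu}, bound the three terms by the hypothesis on $D-v$, by $\tk{\mm}{D,v}=2\binom{\lfloor n/2\rfloor+1}{3}$ for $v$ incident to the reference face, and by Lemma~\ref{lemma:main2_cumuplus_new}, then conclude with Corollary~\ref{corollary:triple_cumu_lower_bound} for $n$ odd. Your extra care about the choice of reference face and the Pascal-identity arithmetic only makes explicit what the paper leaves implicit.
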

\begin{proof}
	We have $\tk{\mm}{D,v}\geq 2{\lfloor\frac{n}{2}\rfloor+1 \choose 3}$ for any face that is incident to $v$ in $D$, and because $v$ has a pair-sequence and due to Lemma \ref{lemma:main2_cumuplus_new}, it follows that $\di{\mm}{D,D-v}\geq {\lfloor\frac{n}{2}\rfloor+1 \choose 3}$. 
	Using Lemma \ref{lemma:recursive_triple_cumu}, it follows for every face incident to $v$ 
	$\tk{\mm}{D} \geq 3{\lfloor\frac{n}{2}\rfloor+2 \choose 4} \text{.}$
	Since $n$ is odd, the result follows with Corollary \ref{corollary:triple_cumu_lower_bound}.
\end{proof}
Next, we define \sspsy. 
\begin{definition}\label{def:almostseqshellability} 
	Let $D$ be a good drawing of $K_n$ with $n$ odd.
	If there exists a vertex $v\in V$ that has a pair-sequence and the subdrawing $D-v$ is seq-shellable for $f(v)$, then we call $D$ \ssps. 
\end{definition}
Using Lemma \ref{lemma:abi_generalized_01}, we prove the Harary-Hill Conjecture for \ssps{} drawings.
By definition the subdrawing $D-v$ is seq-shellable, hence $\tk{\mmm}{D-v}\geq 3{\lfloor\frac{n}{2}\rfloor+1 \choose 4}$ for $f(v)$ (see Corollary \ref{corollary:seqshell3cumu}). 
Consequently, Theorem \ref{theorem:main_hhc_abi} follows.
\begin{theorem}\label{theorem:main_hhc_abi}
	If $D$ is a \ssps{} drawing of $K_n$, then $cr(D)\geq H(n)$. 
\end{theorem}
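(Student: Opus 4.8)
The plan is to obtain the theorem directly from Lemma~\ref{lemma:abi_generalized_01}, which already packages the hard analytic work; what remains is only to check that a \ssps{} drawing satisfies its two hypotheses. So I would start by unpacking Definition~\ref{def:almostseqshellability}: since $D$ is \ssps{} and $n$ is odd, there is a vertex $v\in V$ that has a pair-sequence, and the subdrawing $D-v$ is seq-shellable with respect to the superface $f(v)$. The first hypothesis of Lemma~\ref{lemma:abi_generalized_01} --- that $v$ has a pair-sequence --- is then immediate, and I only need to establish the triple-cumulated lower bound on $D-v$.

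The key step is to feed $D-v$ into Corollary~\ref{corollary:seqshell3cumu}. Here $D-v$ is a good drawing of $K_{n-1}$, and since $n$ is odd, $n-1$ is even, so $\lfloor\frac{n-1}{2}\rfloor-2=\frac{n-1}{2}-2=\mm$; hence the index $\mmm$ lies in the admissible range $\{0,\dots,\lfloor\frac{n-1}{2}\rfloor-2\}$ (with the convention $\tk{-1}{D-v}=0$ covering the smallest cases). Corollary~\ref{corollary:seqshell3cumu} thus gives $\tk{\mmm}{D-v}\geq 3\binom{\mmm+4}{4}$ with respect to $f(v)$. A one-line binomial bookkeeping check then reconciles this with the bound required by Lemma~\ref{lemma:abi_generalized_01}: for $n$ odd we have $\mmm+4=\lfloor\frac n2\rfloor+1$, so $3\binom{\mmm+4}{4}=3\binom{\lfloor n/2\rfloor+1}{4}$, which is exactly the quantity appearing in the hypothesis.

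With both hypotheses verified --- $v$ has a pair-sequence, and $\tk{\mmm}{D-v}\geq 3\binom{\lfloor n/2\rfloor+1}{4}$ with respect to $f(v)$ --- Lemma~\ref{lemma:abi_generalized_01} yields $cr(D)\geq H(n)$, and the theorem follows.

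I do not expect a genuine obstacle inside this proof; the real content lies upstream. The substantive steps are Lemma~\ref{lemma:invarianteinvariant} (the double cumulated invariant value at the top index is reference-face-independent when a subdrawing has oddly many vertices, which is what licenses the local change of reference face along the pair-sequence) and Lemma~\ref{lemma:main2_cumuplus_new} (a pair-sequence of $v$ forces $\di{\mm}{D,D-v}\geq\binom{\lfloor n/2\rfloor+1}{3}$), both of which are combined through the recursive identity of Lemma~\ref{lemma:recursive_triple_cumu} inside Lemma~\ref{lemma:abi_generalized_01}. Relative to that, Theorem~\ref{theorem:main_hhc_abi} is a bookkeeping corollary, and the only point demanding care is the index/binomial arithmetic relating the parameter $\mmm$ of $D$ to the seq-shellability bound for the $(n-1)$-vertex drawing $D-v$, together with a sanity check of the small-$n$ boundary cases where $\mmm\le 0$.
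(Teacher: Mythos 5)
Your proposal is correct and follows exactly the paper's own route: unpack Definition~\ref{def:almostseqshellability} to get the vertex $v$ with a pair-sequence and $D-v$ seq-shellable for $f(v)$, apply Corollary~\ref{corollary:seqshell3cumu} to obtain $\tk{\mmm}{D-v}\geq 3\binom{\lfloor n/2\rfloor+1}{4}$, and conclude via Lemma~\ref{lemma:abi_generalized_01}. The index and binomial bookkeeping you carry out is the same (implicit) check the paper relies on, so there is nothing to add.
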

The drawing $D$ in Figure \ref{fig:k11_sss} is \ssps{} but not seq-shellable.
It is impossible to find %a face, 
a vertex sequence and corresponding simple sequences to apply the definition of seq-shellability.
However, the subdrawing $D-v$ is seq-shellable for face $f(v)$ (see Fig.~\ref{fig:k11_seqshell}) and $v$ has a pair-sequence.
Consequently, $D$ is \ssps.

We are not aware of a crossing optimal \ssps{} drawing that is not seq-shellable.
Every $(\lfloor\frac{n}{2}\rfloor-1)$-seq-shellable drawing $D$ with $n$ odd is also \ssps:
By definition $D$ has a vertex sequence $a_0,\ldots,a_{\lfloor\frac{n}{2}\rfloor-1}$, and each $a_i$ has a simple sequence $S_i$ with $i\in\{0,\ldots,\lfloor\frac{n}{2}\rfloor-1\}$. 
The first $\lfloor\frac{n}{2}\rfloor-2$ vertices of $S_0$ are a pair-sequence for $a_0$.
Moreover, the drawing $D-a_0$ is $(\lfloor\frac{n}{2}\rfloor-2)$-seq-shellable with the vertex sequence 
$a_1,\ldots,a_{\lfloor\frac{n}{2}\rfloor-1}$ and its corresponding simple sequences.
However, there exist $(\lfloor\frac{n}{2}\rfloor-2)$-seq-shellable drawings that are not \ssps{} (see Fig.~\ref{fig:k9_bis_not_almost_bishell}).
Thus, \sspsy{} is a new distinct class that intersects but does not contain the class of seq-shellable drawings.
\begin{figure}[htp!]
	\centering
	\includegraphics[width=0.6\linewidth]{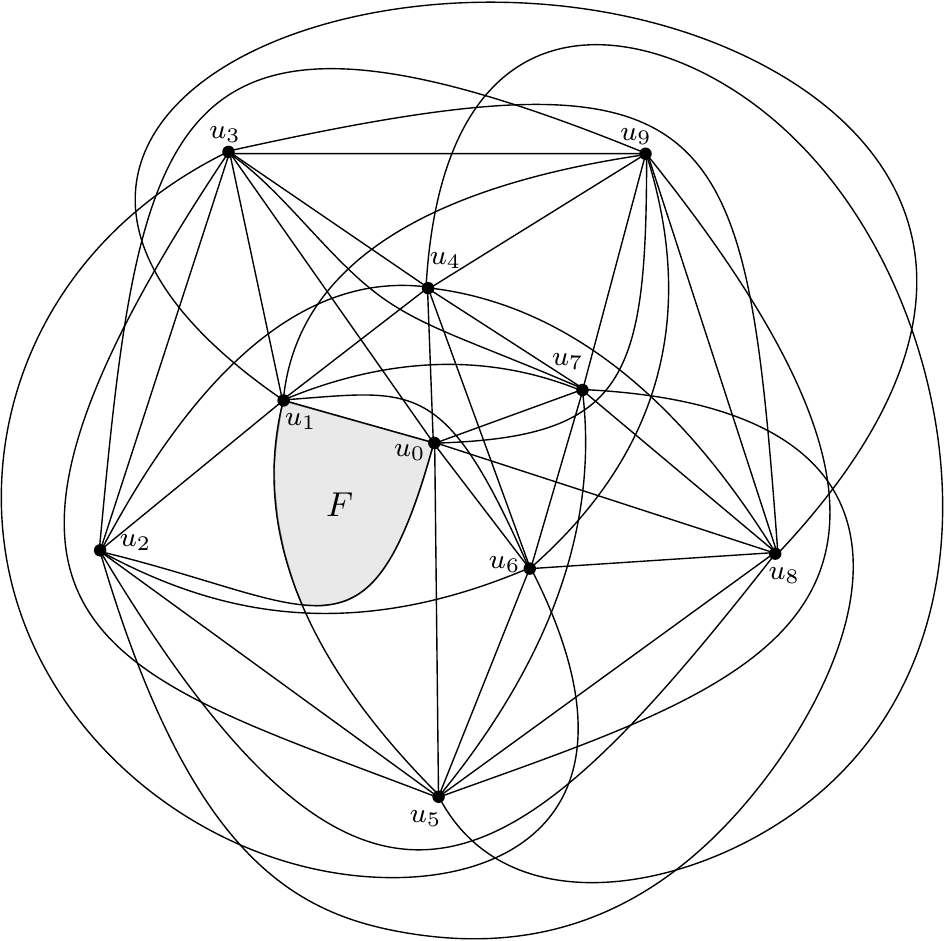}
	\caption{
		Subdrawing $D-v$ of the drawing $D$ of $K_{11}$ shown in Figure \ref{fig:k11_sss}. $D-v$ is seq-shellable for face $F$, vertex sequence $(u_0,u_6,u_5,u_8)$ and the simple sequences $S_0=(u_1,u_2,u_3,u_4)$, $S_1=(u_1,u_2,u_3)$, $S_2=(u_1,u_2)$ and $S_3=(u_1)$.
	}
	\label{fig:k11_seqshell}
	\vspace{5mm}
	\centering
	\includegraphics[width=0.6\linewidth]{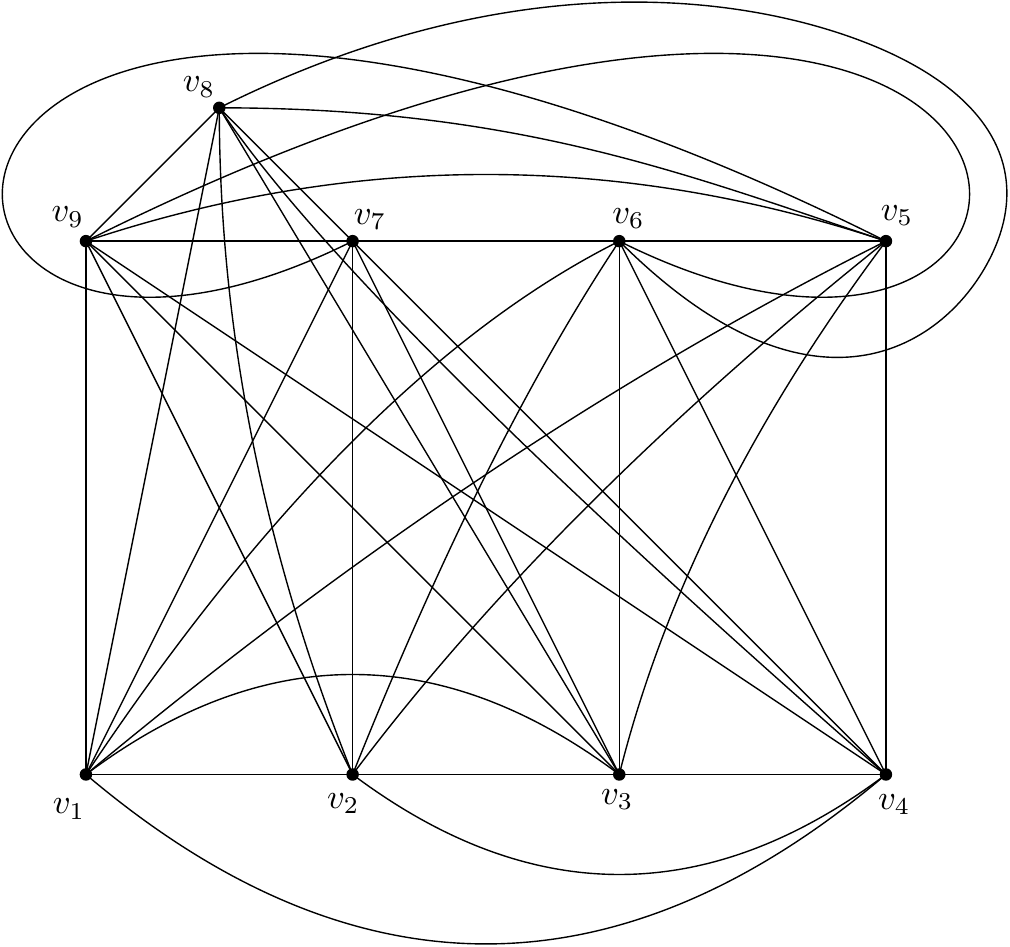}
	\caption{
		Drawing of $K_9$ that is not \ssps{} but $(\lfloor\frac{n}{2}\rfloor-2)$-seq-shellable with respect to the face that is incident to the vertices $v_1,v_2,v_4$. For every vertex $v\in\{v_1,\ldots,v_9\}$ the subdrawing $D-v$ is not seq-shellable with respect to $f(v)$, and therefore $D$ is not \ssps. }
	\label{fig:k9_bis_not_almost_bishell}
\end{figure}	
\section{$k$-Deviations} \label{sec:kdev}
In the following, we introduce $k$-deviations which we use to represent the difference between (cumulated) $k$-edges and optimal values; 
$k$-deviations allow us to formulate conditions under which we are able to change the reference face even more freely. % or do not need one at all.

Note that if for a drawing $D$ of $K_n$ $\sk{k}{D}=3(k+1)$ for all $0\leq k\leq \lfloor\frac{n}{2}\rfloor-2$, then $cr(D)= H(n)$.
We define $k$-deviations as the difference between this value and the number of $k$-edges in a drawing.
\begin{definition}
	Let $D$ be a good drawing of $K_n$, $F\in \mathcal{F}(D)$ and $E_k(D)$ the number of $k$-edges for $0\leq k \leq \lfloor\frac{n}{2}\rfloor-2$ with respect to $F$. 
	We denote $\sd{k}{D} \coloneqq \sk{k}{D}-3(k+1)$ the $k$-deviation of the drawing $D$ for $0\leq k \leq \lfloor\frac{n}{2}\rfloor-2$ with respect to $F$. 
	Moreover, we define the cumulated versions of the $k$-deviation for $F$ as
	\begin{align*}
	&\dd{k}{D} \coloneqq \displaystyle\sum_{i=0}^{k}\sum_{j=0}^{i}\sd{j}{D} = \sum_{i=0}^{k}(k+1-i)\sd{i}{D}\text{~and}\\ 
	&\td{k}{D} \coloneqq \displaystyle\sum_{i=0}^{k}\dd{i}{D}=\sum_{i=0}^{k}{k+2-i\choose 2}\sd{i}{D}\text{.}
	\end{align*}
	Finally, we define the deviation of the crossing number of $D$ from the Harary-Hill optimal number of crossings as	$\Delta_{cr}(D) \coloneqq cr(D) - H(n)\text{.}$
\end{definition}
We can express $k$-deviations in the following ways.
\begin{lemma}\label{lemma:kdev_01}
	Let $D$ be a good drawing of $K_n$. For a reference face $F\in\mathcal{F}(D)$ and $0\leq k\leq \lfloor\frac{n}{2}\rfloor-2$, we have 
	$\td{k}{D} = \td{k-1}{D} + \dd{k}{D}\textrm{.}$
\end{lemma}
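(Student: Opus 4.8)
The plan is to read off the identity directly from the definition of the triple cumulated deviation, peeling off the top term exactly as in the recursion for triple cumulated $k$-edges (Lemma~\ref{lemma:recursive_triple_cumu}). First I would fix the convention $\td{-1}{D}\coloneqq 0$, mirroring the convention $\tk{-1}{D}\coloneqq 0$ introduced earlier, so that the claimed recursion also covers the base case $k=0$. Throughout I keep the reference face $F$ fixed, so every quantity appearing in the identity refers to the same $F$ and the equality will be a purely formal consequence of the definitions.

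With this convention in place, the argument is a one-line manipulation of the defining sum $\td{k}{D}=\sum_{i=0}^{k}\dd{i}{D}$: splitting off the last summand gives $\td{k}{D}=\big(\sum_{i=0}^{k-1}\dd{i}{D}\big)+\dd{k}{D}=\td{k-1}{D}+\dd{k}{D}$, which is exactly the assertion.

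As a consistency check, and as an alternative route via the closed forms, I would compare the coefficients of $\sd{i}{D}$ on the two sides. From $\td{k}{D}=\sum_{i=0}^{k}{k+2-i\choose 2}\sd{i}{D}$ and $\td{k-1}{D}+\dd{k}{D}=\sum_{i=0}^{k-1}{k+1-i\choose 2}\sd{i}{D}+\sum_{i=0}^{k}(k+1-i)\sd{i}{D}$, the case $0\le i\le k-1$ reduces to the Pascal identity ${m\choose 2}+m={m+1\choose 2}$ with $m=k+1-i$, while for $i=k$ both sides contribute the coefficient $1={2\choose 2}$. There is essentially no obstacle here; the only point that needs a word of care is setting the base-case convention $\td{-1}{D}=0$ and observing that the reference face $F$ is held fixed throughout, so that all three quantities in the identity are taken with respect to the same $F$.
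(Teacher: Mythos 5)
Your proof is correct, and it is in fact more direct than the paper's. You read the identity straight off the defining sum $\td{k}{D}=\sum_{i=0}^{k}\dd{i}{D}$ by peeling off the top summand (with the convention $\td{-1}{D}=0$ and the reference face held fixed), and your coefficient check via ${m\choose 2}+m={m+1\choose 2}$ is a valid second confirmation. The paper instead takes a small detour: it first records the offset relations $\dk{k}{D}=\dd{k}{D}+3{k+3\choose 3}$ and $\tk{k}{D}=\td{k}{D}+3{k+4\choose 4}$, invokes the analogous recursion $\tk{k}{D}=\tk{k-1}{D}+\dk{k}{D}$ for the actual cumulated $k$-edge counts, and then cancels the binomial reference terms to obtain the recursion for the deviations. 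The two arguments buy the same thing at the same (negligible) cost; yours avoids mentioning $\tk{}{}$ and $\dk{}{}$ altogether, while the paper's version makes explicit that the deviation quantities satisfy exactly the same recursion as the quantities they are deviations of, which is the viewpoint it reuses later. No gap either way.
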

\begin{proof}%\textbf{Proof of Lemma \ref{lemma:kdev_01}.}
	We have for $0\leq k\leq \lfloor\frac{n}{2}\rfloor-2$
	\begin{align*}
	\dk{k}{D} &= 
	%	\sum_{i=0}^{k}(k+1-i)\sk{i}{D}
	%	\\&= \sum_{i=0}^{k}(k+1-i)\big(\sd{i}{D}+3(i+1)\big)
	%	\\&= \sum_{i=0}^{k}(k+1-i)\sd{i}{D}\\&\hspace{8mm}+3\sum_{i=0}^{k}(k+1-i)(i+1)
	%	\\&=
	\dd{k}{D}+3{k+3 \choose 3} \text{~~and~~}
	%	\end{align*}
	%	and
	%	\begin{align*}
	\tk{k}{D} = 
	%	\sum_{i=0}^{k}{k+2-i\choose 2}\sk{i}{D} 
	%	\\&= \sum_{i=0}^{k}{k+2-i\choose 2}\big(\sd{i}{D}+3(i+1)\big)
	%	\\&= \sum_{i=0}^{k}{k+2-i\choose 2}\sd{i}{D}\\&\hspace{8mm}+3\sum_{i=0}^{k}{k+2-i\choose 2}(i+1)
	%	\\&=
	\td{k}{D}+3{k+4 \choose 4}\text{.}
	\end{align*}
	Hence $\tk{k}{D} = \tk{k-1}{D} + \dk{k}{D}$, and %for $0\leq k\leq \mm$, it follows
	it follows
	$\td{k}{D} = \td{k-1}{D} + \dd{k}{D}$
	for all $k\in\{0,\ldots, \mm\}$ (we define $\td{-1}{D}\coloneqq0$).
\end{proof}
\begin{cor}
	\label{corollary:kdev_delta_cr_01}
	Let $D$ be a good drawing of $K_n$. For $n$ odd we have
	$\hd{D} = 2\td{\frac{n-1}{2}-2}{D}$, and for a reference face $F\in  \mathcal{F}(D)$ and $n$ even
	$\hd{D} = \td{\frac{n}{2}-2}{D}+\td{\frac{n}{2}-3}{D}. $
\end{cor}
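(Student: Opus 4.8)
The plan is to combine Theorem \ref{theorem:triplecumubound}, which expresses $cr(D)$ in terms of triple cumulated $k$-edges, with the relation $\tk{k}{D} = \td{k}{D} + 3\binom{k+4}{4}$ that is established (as an intermediate step) in the proof of Lemma \ref{lemma:kdev_01}. The key arithmetic fact I would need is that when $\tk{k}{D}$ and $\tk{k-1}{D}$ are substituted into the formulas of Theorem \ref{theorem:triplecumubound}, the constant terms $3\binom{k+4}{4}$ (with $k = \lfloor\frac{n}{2}\rfloor - 2$, and in the even case also $k = \lfloor\frac{n}{2}\rfloor - 3$) combine with the cubic terms $\frac{1}{8}n(n-1)(n-3)$ or $\frac{1}{8}n(n-1)(n-2)$ to leave exactly $H(n)$. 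In other words, I would verify the identities
\begin{align*}
2\cdot 3\binom{\frac{n-1}{2}+2}{4} - \tfrac{1}{8}n(n-1)(n-3) &= H(n) \qquad (n \text{ odd}),\\
3\binom{\frac{n}{2}+2}{4} + 3\binom{\frac{n}{2}+1}{4} - \tfrac{1}{8}n(n-1)(n-2) &= H(n) \qquad (n \text{ even}).
\end{align*}
These are precisely the statements that a drawing achieving $\sk{k}{D} = 3(k+1)$ for all $k$ has exactly $H(n)$ crossings, which is already asserted in the text immediately before the definition of $k$-deviations; so I would either cite that remark or check the binomial identities directly (they are routine, following from $\binom{m}{4}$ being a degree-four polynomial in $m$ and matching the product of four floor terms defining $H(n)$).

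Concretely, for $n$ odd I would write $m = \frac{n-1}{2} - 2$ and compute
\[
\hd{D} = cr(D) - H(n) = \Big(2\tk{m}{D} - \tfrac{1}{8}n(n-1)(n-3)\Big) - H(n) = 2\big(\td{m}{D} + 3\tbinom{m+4}{4}\big) - \tfrac{1}{8}n(n-1)(n-3) - H(n),
\]
and then the constant terms cancel by the first identity above, leaving $\hd{D} = 2\td{m}{D}$. For $n$ even I would set $m = \frac{n}{2} - 2$ and do the analogous substitution for both $\tk{m}{D} = \td{m}{D} + 3\binom{m+4}{4}$ and $\tk{m-1}{D} = \td{m-1}{D} + 3\binom{m+3}{4}$, using the second identity to absorb all constants and leave $\hd{D} = \td{m}{D} + \td{m-1}{D}$.

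The only real obstacle is the bookkeeping in the even case: one must be careful that the relation $\tk{k}{D} = \td{k}{D} + 3\binom{k+4}{4}$ is applied with the correct index shift for the $\tk{m-1}{D}$ term (giving $3\binom{m+3}{4}$, not $3\binom{m+4}{4}$), and that the face-dependence is handled correctly — for $n$ even the quantity $\tk{m}{D} + \tk{m-1}{D}$ is face-independent by Theorem \ref{theorem:triplecumubound}, so $\td{m}{D} + \td{m-1}{D}$ inherits this, matching the "with respect to a reference face $F$" phrasing. Beyond that, everything reduces to the polynomial identities for $H(n)$, which can be stated as already known from the remark preceding the $k$-deviation definition, so no genuinely new work is required.
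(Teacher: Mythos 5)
Your proposal is correct and follows essentially the same route as the paper: substitute $\tk{k}{D} = \td{k}{D} + 3\binom{k+4}{4}$ into the formulas of Theorem \ref{theorem:triplecumubound} and observe that the constant terms cancel against $H(n)$ (the paper asserts this cancellation without writing out the binomial identities, which you correctly identify as the only arithmetic to check). The index bookkeeping and face-independence remarks in the even case match the paper's treatment as well.
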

\begin{proof}%\textbf{Proof of Corollary \ref{corollary:kdev_delta_cr_01}.}
	This result is a direct consequence of the definition of $\Delta_{cr}(D)$, the cumulated $k$-deviations and Theorem \ref{theorem:triplecumubound}.
	For $n$ odd we have
	\begin{align*}
	\hd{D} &= cr(D) - H(n) 
	\\&=2\Bigg(\td{\mm}{D}+3{\lfloor\frac{n}{2}\rfloor+2 \choose 4} \Bigg) - \frac{1}{8}n(n-1)(n-3) -H(n)
	\\&=2\cdot\td{\mm}{D}\text{,}
	\end{align*}
	and for $n$ even
	\begin{align*}
	\hd{D} &= cr(D) - H(n)
	\\&=\Bigg(\td{\mm}{D}+3{\lfloor\frac{n}{2}\rfloor+2 \choose 4} \Bigg)+\Bigg(\td{\mmm}{D}+3{\lfloor\frac{n}{2}\rfloor+1 \choose 4} \Bigg) 
	\\&\hspace{10mm}-\frac{1}{8}n(n-1)(n-2) - H(n)
	\\&=\td{\mm}{D}+\td{\mmm}{D}\text{.}
	\end{align*}
\end{proof}
Notice, that Corollary \ref{corollary:kdev_delta_cr_01} implies Kleitman's parity theorem for complete graphs \cite{kleitman1976note}.
The following lemma gives a lower bound on $\td{\mmm}{D}$.
\begin{lemma}\label{lemma:dbgzero}
	Let $D$ be a good drawing of $K_n$ with $cr(D)\geq H(n)$. For each $F\in \mathcal{F}(D)$ with $\td{\mm}{D} \geq \dd{\mm}{D}$, is 
	$\td{\mmm}{D}\geq 0$. 
\end{lemma}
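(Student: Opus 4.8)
The plan is to use the recursion from Lemma \ref{lemma:kdev_01}, namely $\td{\mm}{D} = \td{\mmm}{D} + \dd{\mm}{D}$, and rearrange it to obtain $\td{\mmm}{D} = \td{\mm}{D} - \dd{\mm}{D}$. The hypothesis $\td{\mm}{D} \geq \dd{\mm}{D}$ then immediately gives $\td{\mmm}{D} \geq 0$. So the real content is to check that the recursion in Lemma \ref{lemma:kdev_01} applies at the index $k = \mm$ — which it does, since that lemma is stated for all $0 \leq k \leq \lfloor\frac{n}{2}\rfloor-2$ — and that the hypothesis is exactly the inequality needed after rearranging.

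The first step is to instantiate Lemma \ref{lemma:kdev_01} with $k = \mm = \lfloor\frac{n}{2}\rfloor - 2$ with respect to the given reference face $F$, yielding $\td{\mm}{D} = \td{\mmm}{D} + \dd{\mm}{D}$. The second step is to subtract $\dd{\mm}{D}$ from both sides to get $\td{\mmm}{D} = \td{\mm}{D} - \dd{\mm}{D}$. The third step is to invoke the hypothesis $\td{\mm}{D} \geq \dd{\mm}{D}$ to conclude that the right-hand side is nonnegative, hence $\td{\mmm}{D} \geq 0$.

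I do not expect any genuine obstacle here; the statement is essentially an algebraic unwinding of the cumulated-deviation recursion, and the hypothesis $cr(D) \geq H(n)$ appears to be included for context (via Corollary \ref{corollary:kdev_delta_cr_01} it says something about $\td{\mm}{D}$ and $\td{\mmm}{D}$ jointly for $n$ even, or about $\td{\mm}{D}$ alone for $n$ odd) rather than being strictly needed for this particular deduction. The one point to be careful about is that all quantities $\td{\mm}{D}$, $\td{\mmm}{D}$, $\dd{\mm}{D}$ are taken with respect to the same reference face $F$, which is exactly how the hypothesis and Lemma \ref{lemma:kdev_01} are phrased, so consistency is automatic.
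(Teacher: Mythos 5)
Your proof is correct and rests on the same key identity as the paper's, namely Lemma \ref{lemma:kdev_01} instantiated at $k=\mm$, but you argue directly where the paper argues by contradiction. The paper assumes $\td{\mmm}{D}<0$, splits into the cases $n$ even and $n$ odd, invokes Corollary \ref{corollary:kdev_delta_cr_01} together with $cr(D)\geq H(n)$ to conclude $\hd{D}\geq 0$, and then derives $\dd{\mm}{D}=\td{\mm}{D}-\td{\mmm}{D}>\td{\mm}{D}$, contradicting the hypothesis $\td{\mm}{D}\geq\dd{\mm}{D}$. Unwinding that contradiction shows it uses nothing beyond the rearrangement $\td{\mmm}{D}=\td{\mm}{D}-\dd{\mm}{D}\geq 0$ that you write down: the facts extracted from $cr(D)\geq H(n)$ (such as $\td{\mm}{D}>0$) are never actually needed to close the contradiction, and the parity case split is likewise inessential. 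So your observation that the hypothesis $cr(D)\geq H(n)$ is included for context rather than being logically required here is accurate, and your direct one-line version is the cleaner argument. There is no gap.
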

\begin{proof}%\textbf{Proof of Lemma \ref{lemma:dbgzero}.}
	We consider the cases for $n$ even and $n$ odd separately:
	Let $D$ be a good drawing of $K_n$ with $n$ even and $cr(D)\geq H(n)$. 
	Let $F\in \mathcal{F}(D)$ be a reference face with $\td{\mm}{D} \geq \dd{\mm}{D}$
	and assume $\td{\mmm}{D} < 0$. 
	With Corollary \ref{corollary:kdev_delta_cr_01}, we have
	\[\hd{D}=\td{\mm}{D}+\td{\mmm}{D} \geq 0\] and it follows 
	\[\td{\mm}{D}\geq |\td{\mmm}{D}|>0\textrm{.}\]
	From Lemma \ref{lemma:kdev_01} follows that 
	\[\td{\mm}{D} = \td{\mmm}{D} + \dd{\mm}{D}\text{.}\]	
	Therefore, 
	\begin{align*}
	\dd{\mm}{D}&=\td{\mm}{D} - \td{\mmm}{D}> \td{\mm}{D}\textrm{,}
	\end{align*}
	a contradiction to $\td{\mm}{D} \geq \dd{\mm}{D}$.
	Now, let $D$ be a good drawing of $K_n$ with $n$ odd and $cr(D)\geq H(n)$. 
	Let $F\in \mathcal{F}(D)$ be a reference face with $\td{\mm}{D} \geq\dd{\mm}{D}$
	and assume $\td{\mmm}{D} < 0$. With Corollary \ref{corollary:kdev_delta_cr_01} and due to 
	\[\hd{D}=2\td{\mm}{D}\geq 0\]
	we have 
	$\td{\mm}{D}\geq0 \textrm{~~and~~} \td{\mmm}{D}+\dd{\mm}{D}\geq 0 \textrm{.}$
	From $\td{\mmm}{D} < 0$ and
	\[\td{\mm}{D}=\td{\mmm}{D}+ \dd{\mm}{D}\] follows a contradiction to $\td{\mm}{D} \geq \dd{\mm}{D}$.
\end{proof}
With the following proposition, we are able to select a new reference face for the subdrawing $D-v$.
\begin{prop}\label{theorem:conjtheorem}
	Let $D$ be a good drawing of $K_n$ with $n$ odd and $v\in V$, such that the subdrawing $D-v$ is seq-shellable for any face $F\in \mathcal{F}(D-v)$. 
	If $v$ has a pair-sequence and in subdrawing $D-v$ for $f(v)$ 
	$\td{\frac{n-1}{2}-2}{D-v} \geq \dd{\frac{n-1}{2}-2}{D-v}\textrm{,}$
	then $cr(D)\geq H(n)$.
\end{prop}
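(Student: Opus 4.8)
The plan is to combine the recursive representation of the triple cumulated $k$-edges (Lemma~\ref{lemma:recursive_triple_cumu}), the pair-sequence bound on invariant edges (Lemma~\ref{lemma:main2_cumuplus_new}), and the new lower bound on $\td{\mmm}{\cdot}$ (Lemma~\ref{lemma:dbgzero}), all phrased in the language of $k$-deviations. Write $m=\mm=\frac{n-1}{2}-2$. First I would apply the translation between $\hat E$ and $\hat\Delta$ from the proof of Lemma~\ref{lemma:kdev_01}, namely $\tk{k}{D}=\td{k}{D}+3\binom{k+4}{4}$, to rewrite the recursion of Lemma~\ref{lemma:recursive_triple_cumu} for the vertex $v$ in terms of deviations: with respect to any face incident to $v$ in $D$ and the superface $f(v)$ in $D-v$, this should collapse (using $\tk{m}{D,v}=2\binom{m+3}{3}$ and $\di{m}{D,D-v}\ge\binom{\lfloor n/2\rfloor+1}{3}$ from Lemma~\ref{lemma:main2_cumuplus_new}, together with a Pascal-type identity $2\binom{m+3}{3}+\binom{m+3}{3}+3\binom{m+3}{4}=3\binom{m+4}{4}$) to the inequality $\td{m}{D}\ge \td{m-1}{D-v}$ for every face incident to $v$.

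Next I would bring in the hypothesis. Since $D-v$ is seq-shellable for \emph{any} face of $D-v$ — in particular for $f(v)$ — Corollary~\ref{corollary:seqshell3cumu} gives $\tk{k}{D-v}\ge 3\binom{k+4}{4}$ for all relevant $k$, i.e. $\td{k}{D-v}\ge 0$ for all $k$ with respect to $f(v)$; in particular $\td{m-2}{D-v}\ge 0$. The assumption $\td{m}{D-v}\ge\dd{m}{D-v}$ is stated for $f(v)$, but $D-v$ is a drawing of $K_{n-1}$ with $n-1$ even, so we are in the even case of the ambient machinery for the subdrawing — here I must be slightly careful about indices: for $K_{n-1}$ the relevant top index is $\lfloor\frac{n-1}{2}\rfloor-2 = \frac{n-1}{2}-2 = m$, so the hypothesis is exactly $\td{m}{D-v}\ge\dd{m}{D-v}$ with $\mm$ replaced by the value appropriate to $K_{n-1}$. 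Since $cr(D-v)\ge H(n-1)$ (it is seq-shellable, so this holds, e.g. via Corollary~\ref{corollary:seqshell3cumu} and Theorem~\ref{theorem:triplecumubound}), Lemma~\ref{lemma:dbgzero} applies to $D-v$ with reference face $f(v)$ and yields $\td{m-1}{D-v}\ge 0$ — note that for $K_{n-1}$ the index ``$\mmm$'' is $\lfloor\frac{n-1}{2}\rfloor-3 = m-1$, so Lemma~\ref{lemma:dbgzero} indeed delivers nonnegativity of $\td{m-1}{D-v}$.

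Combining, we get $\td{m}{D}\ge\td{m-1}{D-v}\ge 0$ with respect to every face incident to $v$ in $D$; in particular the quantity is nonnegative, and since for $n$ odd $\hd{D}=2\td{m}{D}$ by Corollary~\ref{corollary:kdev_delta_cr_01} (and this value is face-independent), we conclude $cr(D)-H(n)=\hd{D}\ge 0$, i.e. $cr(D)\ge H(n)$. Equivalently one can phrase the last step through Corollary~\ref{corollary:triple_cumu_lower_bound}: $\td{m}{D}\ge 0$ is the same as $\tk{m}{D}\ge 3\binom{\lfloor n/2\rfloor+2}{4}$, which is exactly the odd-case hypothesis of that corollary.

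The main obstacle, and the only place real care is needed, is the index bookkeeping when passing from $K_n$ to the subdrawing $K_{n-1}$: because $n$ is odd, $n-1$ is even, so the ``$\mm$'' and ``$\mmm$'' of $K_{n-1}$ are $m$ and $m-1$ respectively rather than $m-1$ and $m-2$, and one must check that the hypothesis $\td{m}{D-v}\ge\dd{m}{D-v}$ matches the shape required by Lemma~\ref{lemma:dbgzero} applied to $K_{n-1}$, and that the conclusion $\td{m-1}{D-v}\ge 0$ of that lemma is precisely the term appearing in the deviation-form of Lemma~\ref{lemma:recursive_triple_cumu}. Once the indices line up, the argument is a short chain of inequalities with no further computation beyond the binomial identity already implicit in the proof of Lemma~\ref{lemma:main2_cumuplus_new}.
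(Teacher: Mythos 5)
Your proposal is correct and follows essentially the same route as the paper: establish $cr(D-v)\ge H(n-1)$ from seq-shellability, apply Lemma~\ref{lemma:dbgzero} to $D-v$ with reference face $f(v)$ to obtain $\td{\frac{n-1}{2}-3}{D-v}\ge 0$, and feed this into the recursion of Lemma~\ref{lemma:recursive_triple_cumu} together with the pair-sequence bound of Lemma~\ref{lemma:main2_cumuplus_new}; your index bookkeeping for the even subdrawing $K_{n-1}$ is also the correct one. The only blemish is your parenthetical reading of ``seq-shellable for any face'' as holding in particular for $f(v)$ --- the intended hypothesis is that $D-v$ is seq-shellable for \emph{some} face (not necessarily $f(v)$), which is precisely why the deviation condition and Lemma~\ref{lemma:dbgzero} are needed at all --- but since your argument never actually uses that stronger reading, nothing breaks.
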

\begin{proof}
	Let $D$ be a drawing $K_n$ with $n$ odd and $v\in V$ such that $D-v$ is seq-shellable with respect to a face $F\in\mathcal{F}(D-v)$.
	For the subdrawing $D-v$, which has an even number of $n-1$ vertices, we know that the sum  
	\[\tk{\frac{n-1}{2}-2}{D-v}+\tk{\frac{n-1}{2}-3}{D-v}\] 
%	$\tk{\frac{n-1}{2}-2}{D-v}+\tk{\frac{n-1}{2}-3}{D-v}$
	is the same for all faces of $D-v$.
	Due to the seq-shellability and Corollary \ref{corollary:seqshell3cumu}, it follows that $\td{\frac{n-1}{2}-2}{D-v}\geq 0$ and $\td{\frac{n-1}{2}-3}{D-v}\geq 0$ for $F$ and hence for all faces
	\[\td{\frac{n-1}{2}-2}{D-v}+\td{\frac{n-1}{2}-3}{D-v}\geq 0 \text{.}\]
%	$\td{\frac{n-1}{2}-2}{D-v}+\td{\frac{n-1}{2}-3}{D-v}\geq 0 \text{.}$
	%
	Because $cr(D-v)\geq H(n-1)$ and $\td{\frac{n-1}{2}-2}{D-v} \geq \dd{\frac{n-1}{2}-2}{D-v}$ 
	%for all faces and in particular 
	for the superface $f(v)$ of $v$, we have with Lemma \ref{lemma:dbgzero} $\td{\frac{n-1}{2}-3}{D-v}\geq 0$ for $f(v)$.
	Since $v$ is incident to a face $F'\subset f(v)$ and $v$ has a pair-sequence, which ensures $\di{\frac{n-1}{2}-2}{D,D-v}\geq {\frac{n-1}{2}+1 \choose 3}$ it follows from Lemma \ref{lemma:recursive_triple_cumu} for face $F'$ that
	$\tk{\frac{n-1}{2}-2}{D} \geq 3{\frac{n-1}{2}+2 \choose 4}$.
	With Corollary \ref{corollary:triple_cumu_lower_bound} follows the result. %and the fact that $n$ is odd 
	%$\tk{k}{D,v}= 2{k+3 \choose 3}$.
\end{proof}
So far, for all drawings and all faces we inspected, the condition of Lemma \ref{lemma:dbgzero} had been fulfilled.
We conjecture it to be true for all good drawings of $K_n$.
\begin{conj}\label{con:con1}
	Let $D$ be a good drawing of $K_n$. With respect to any face $F\in\mathcal{F}(D)$, we have
	\[\td{\mm}{D} \geq \dd{\mm}{D}\textrm{.}\]
\end{conj}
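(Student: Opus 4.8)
The plan is to rewrite the claimed inequality $\td{m}{D} \ge \dd{m}{D}$ (with $m = \mm$) purely in terms of the raw $k$-edge counts $\sk{k}{D}$ and then try to exhibit it as a consequence of the known linear inequalities on cumulated $k$-edges. First I would expand both sides using the binomial identities from the definitions: $\td{m}{D} = \sum_{i=0}^{m}\binom{m+2-i}{2}\sd{i}{D}$ and $\dd{m}{D} = \sum_{i=0}^{m}(m+1-i)\sd{i}{D}$, so the conjecture is equivalent to
\begin{align*}
\sum_{i=0}^{m}\left[\binom{m+2-i}{2} - (m+1-i)\right]\sd{i}{D} \ge 0,
\end{align*}
and $\binom{m+2-i}{2} - (m+1-i) = \binom{m+1-i}{2}$, so the statement collapses to $\sum_{i=0}^{m}\binom{m+1-i}{2}\sd{i}{D} \ge 0$, i.e. to $\td{m-1}{D} \ge 0$ after shifting the index. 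In other words, Conjecture~\ref{con:con1} is \emph{equivalent} to $\td{\mmm}{D} \ge 0$ for all good drawings with respect to all faces — exactly the quantity Lemma~\ref{lemma:dbgzero} produces under the hypothesis one is trying to remove. This reformulation is the first key step, and it already explains why the conjecture is natural but stubborn.

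Next I would try to attack $\td{\mmm}{D} \ge 0$ by induction on $n$, using the recursive decomposition of Lemma~\ref{lemma:recursive_triple_cumu} rephrased for deviations: removing a vertex $v$ incident to $F$ gives, after subtracting the optimal reference values, a relation of the shape $\td{k}{D} = \td{k-1}{D-v} + (\text{invariant-edge contribution}) - (\text{deficit from } \td{k}{D,v})$. Since $v$ is incident to $F$, Lemma~\ref{lemma:vatf} pins down $\tk{k}{D,v} = 2\binom{k+3}{3}$, which is \emph{exactly} the optimal contribution, so that term contributes zero deviation; the whole burden falls on the invariant-edge term $\di{k}{D,D-v}$ versus its optimal value $\binom{k+3}{3}$. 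Thus for a vertex $v$ on $F$ one needs a lower bound $\di{\mmm}{D,D-v} \ge \binom{\mmm+3}{3}$ together with the inductive hypothesis $\td{\mmm-1}{D-v} \ge 0$ for the superface — but the parity issues (the target index jumps differently depending on whether $n$ is even or odd, and the seq-shellability-type lower bounds on invariant edges require a whole pair-sequence, not a single vertex) mean this does not close cleanly, which is precisely why it is only a conjecture.

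The honest fallback, and what I would actually write, is a partial result: prove the conjecture in the cases already under control. For $n$ odd, Corollary~\ref{corollary:kdev_delta_cr_01} gives $\hd{D} = 2\td{\mm}{D}$, so $cr(D) \ge H(n)$ forces $\td{\mm}{D} \ge 0$; combined with $\td{\mm}{D} = \td{\mmm}{D} + \dd{\mm}{D}$ (Lemma~\ref{lemma:kdev_01}), the conjecture $\td{\mm}{D} \ge \dd{\mm}{D}$ is equivalent to $\td{\mmm}{D} \ge 0$, which holds whenever the drawing is seq-shellable for $F$ by Corollary~\ref{corollary:seqshell3cumu}. For $n$ even a similar reduction via $\hd{D} = \td{\mm}{D} + \td{\mmm}{D}$ works under seq-shellability. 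So I would state and prove: \emph{the conjecture holds for every face $F$ with respect to which $D$ is seq-shellable}, and more generally whenever $\td{\mmm}{D} \ge 0$ for $F$. The main obstacle — the reason a full proof is out of reach here — is that a generic good drawing need not be seq-shellable for \emph{any} face, and the recursive/invariant-edge machinery only delivers the needed $\binom{k+3}{3}$ lower bound on double-cumulated invariant edges along a pair-sequence, which in turn presupposes structural shellability; breaking this circularity for arbitrary good drawings is exactly the open problem.
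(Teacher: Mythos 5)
The statement you were asked to prove is stated in the paper as an open conjecture: the authors offer no proof at all, only the remark that the inequality held for every drawing and face they inspected. So there is no proof of record to compare against, and your decision not to claim a complete proof is the correct one. Your first step is nevertheless a genuine and correct contribution: since $\td{k}{D}=\td{k-1}{D}+\dd{k}{D}$ (Lemma~\ref{lemma:kdev_01}, or equivalently the identity $\binom{m+2-i}{2}-(m+1-i)=\binom{m+1-i}{2}$), the conjectured inequality $\td{\mm}{D}\ge\dd{\mm}{D}$ is \emph{identical} to $\td{\mmm}{D}\ge 0$. It is worth noting explicitly what this reveals about Lemma~\ref{lemma:dbgzero}: its hypothesis is literally its conclusion, so that lemma is a tautology modulo Lemma~\ref{lemma:kdev_01}, and the real content of the paper's conditional results is simply the assumption $\td{\mmm}{D}\ge 0$ for the relevant face. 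Your partial result is also correct: under seq-shellability for $F$, Corollary~\ref{corollary:seqshell3cumu} gives $\tk{k}{D}\ge 3\binom{k+4}{4}$ for all $k\le\lfloor\frac{n}{2}\rfloor-2$, i.e.\ $\td{k}{D}\ge 0$, which includes $k=\mmm$.

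The gap, which you name yourself, is that none of this touches a generic good drawing. Your induction sketch via Lemma~\ref{lemma:recursive_triple_cumu} correctly isolates the obstruction --- one would need $\di{k}{D,D-v}\ge\binom{k+3}{3}$ without any shellability hypothesis, and the index bookkeeping misaligns because $\lfloor n/2\rfloor$ drops only every second vertex removal --- but it does not overcome it. One small caution on your fallback paragraph: for $n$ odd you invoke $cr(D)\ge H(n)$ to get $\td{\mm}{D}\ge 0$, but that inequality is itself the Harary--Hill Conjecture and is not available unconditionally; in the seq-shellable case it is redundant anyway, since Corollary~\ref{corollary:seqshell3cumu} already gives the needed sign directly. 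In summary: a correct reformulation and a correct conditional result, clearly flagged as such, with the conjecture itself left open exactly as in the paper.
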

Under the assumption that Conjecture \ref{con:con1} holds, we are able to prove the Harary-Hill Conjecture for another new class of drawings that comprises the classes of seq-shellable and \ssps{} drawings. Here, we can select a different reference face for each vertex.
\begin{theorem}\label{theorem:conjmain}
	Let $D$ be a good drawing of $K_n$ and $v_1,\ldots,v_n$ a sequence of the vertices, such that every vertex $v_i$ with $i\in \{1,\ldots,n\}$ and $i$ odd has a pair-sequence, and every vertex $v_i$ with $i\in \{1,\ldots,n\}$ and $i$ even has a simple sequence. If Conjecture \ref{con:con1} holds, then $cr(D)\geq H(n)$.
\end{theorem}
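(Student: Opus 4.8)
The plan is to induct on $n$, alternately peeling off one vertex from the $v_i$-sequence at a time and recognizing that the parity of the current subdrawing flips at each step, so that odd-indexed vertices are removed exactly when the current subdrawing has an odd number of vertices and even-indexed vertices are removed when it has an even number. Concretely, set $D_0 = D$ and $D_i = D_{i-1} - v_i$, and note $D_i$ is a drawing of $K_{n-i}$. I would aim to show, by downward induction from $i = n$ to $i=0$, that $cr(D_i) \ge H(n-i)$ with respect to a suitably chosen reference face, where the inductive hypothesis supplies the requisite triple-cumulated lower bound for $D_i$ and we lift it to $D_{i-1}$ via the recursive formula of Lemma \ref{lemma:recursive_triple_cumu}. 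The base case $cr(D_n) \ge H(0)$ is trivial.

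For the inductive step I would distinguish the two parities of $m := n-i$, the number of vertices of $D_{i-1}$ after we re-index so that $v = v_i$ is the vertex being removed from $D_{i-1}$ to obtain $D_i$. If $m$ is odd, then $i$ is odd (since $i \equiv n - m$ and we can arrange the hypotheses so that $i$ odd corresponds to $m$ odd — this bookkeeping is where care is needed, and may require running the sequence from $v_n$ downward), so $v$ has a pair-sequence; moreover $D_i = D_{i-1}-v$ has the even number $m-1$ of vertices and, by induction, $cr(D_i) \ge H(m-1)$. Applying Conjecture \ref{con:con1} to $D_i$ with reference face $f(v)$ gives $\td{\frac{m-1}{2}-2}{D_i} \ge \dd{\frac{m-1}{2}-2}{D_i}$, so Lemma \ref{lemma:dbgzero} yields $\td{\frac{m-1}{2}-3}{D_i} \ge 0$, i.e. $\tk{\frac{m-1}{2}-3}{D_i} \ge 3\binom{\frac{m-1}{2}+1}{4}$ with respect to $f(v)$. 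Then exactly as in the proof of Lemma \ref{lemma:abi_generalized_01} — using $\tk{\mm[m]}{D_{i-1},v} \ge 2\binom{\lfloor m/2\rfloor+1}{3}$ for any face incident to $v$, the pair-sequence bound $\di{\frac{m-1}{2}-2}{D_{i-1},D_i} \ge \binom{\frac{m-1}{2}+1}{3}$ from Lemma \ref{lemma:main2_cumuplus_new}, and Lemma \ref{lemma:recursive_triple_cumu} — we obtain $\tk{\frac{m-1}{2}-2}{D_{i-1}} \ge 3\binom{\frac{m-1}{2}+2}{4}$ for a face incident to $v$, whence $cr(D_{i-1}) \ge H(m)$ by Corollary \ref{corollary:triple_cumu_lower_bound}. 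If instead $m$ is even, then $v = v_i$ has a simple sequence, $D_i$ has the odd number $m-1$ of vertices, and by induction $cr(D_i) \ge H(m-1)$; here the argument is the shellability-type argument used in \cite{seqshellable} to push the bound through a single simple-sequence step, combined again with Lemma \ref{lemma:recursive_triple_cumu}, and one uses that for $m-1$ odd the relevant triple-cumulated quantity of $D_i$ is face-independent (Theorem \ref{theorem:triplecumubound}), so the bound transfers from $f(v)$ to the face incident to $v$ needed for $D_{i-1}$.

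The main obstacle is the reference-face bookkeeping across the two kinds of steps: a pair-sequence step needs the incoming bound on $D_i$ at the superface $f(v)$, and it only delivers an outgoing bound at \emph{some} face of $D_{i-1}$ incident to $v$ — which may not be $f(v')$ for the next vertex $v'$ to be removed. Lemma \ref{lemma:invarianteinvariant} and the face-independence of $\tk{\cdot}{}$ for odd subdrawings (Theorem \ref{theorem:triplecumubound}) are precisely the tools that let the double-cumulated-invariant quantity and the top triple-cumulated quantity be moved between faces incident to $v$, and Conjecture \ref{con:con1} — which is stated for \emph{every} face — is what makes Lemma \ref{lemma:dbgzero} available at whichever face we land on. So the crux is to check that at each step the face demanded by the next step is reachable, either because the relevant quantity is face-invariant (odd subdrawing) or because we only ever need a face incident to the current vertex and Conjecture \ref{con:con1} holds uniformly. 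Once this chain of face choices is verified to be consistent, the two inductive steps are essentially Lemma \ref{lemma:abi_generalized_01} and a seq-shellability step, and the theorem follows.
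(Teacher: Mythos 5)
Your proposal matches the paper's proof in essentially all respects: both induct by peeling vertices off the end of the sequence (the paper removes $v_n$ first, confirming your suspicion that the sequence must be consumed from $v_n$ downward), alternate a pair-sequence step on odd subdrawings with a simple-sequence step on even subdrawings, and rely on Lemma \ref{lemma:dbgzero} together with the everywhere-assumed Conjecture \ref{con:con1} and the face-independence of the top triple-cumulated value of odd subdrawings to handle the reference-face bookkeeping you identify as the crux. The paper merely packages the even (simple-sequence) step as a separate Lemma \ref{lemma:conj_conseq} and phrases the odd-$n$ induction as a two-vertex step, which are organizational differences only.
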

The following two lemmas are necessary for our proof of Theorem \ref{theorem:conjmain}.
\begin{lemma}\label{lemma:inv_vwatf_seq}\cite{seqshellable}
	Let $D$ be a good drawing of $K_n$, $F\in\mathcal{F}(D)$ and $v\in V$ with $v$ incident to $F$. 
	If $v$ has a simple sequence $S_v = (u_0,\ldots,u_k)$, then
	\begin{align*}
	\sum_{j=0}^{k}I_{j}(D,D-v)\geq {k+2\choose 2}
	\end{align*}
	with respect to $F$ and for each $k\in \{0,\ldots,\lfloor\frac{n}{2}\rfloor-2\}$.
\end{lemma}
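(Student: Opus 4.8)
The plan is to distribute the required invariant edges among the vertices of the simple sequence, extracting from each $u_i$ a block of invariant edges while working in the subdrawing where $u_i$ has been pushed onto a face containing the reference face. Fix $k\in\{0,\ldots,\lfloor\frac n2\rfloor-2\}$ and write $D_i\coloneqq D-\{u_0,\ldots,u_{i-1}\}$ for $0\le i\le k$, so that $D_0=D$ and $D_i$ has $n_i=n-i$ vertices. By the definition of a simple sequence, $u_i$ is incident to the face $F_i\in\mathcal F(D_i)$ containing $F$; and since $v$ is incident to $F$ in $D$ with $F\subseteq F_i$, vertex $v$ is incident to $F_i$ as well. Thus in each $D_i$ the two vertices $v$ and $u_i$ lie on the common face $F_i$, and Lemma~\ref{lemma:inv_vwatf} together with the edge-by-edge analysis in its proof (as exploited in Lemma~\ref{lemma:inv_vwatf_cumu2}) applies to the pair $v,u_i$ in $D_i$.

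From that analysis I would extract, for each $i$, at least $k-i+1$ distinct edges incident to $u_i$ that are invariant in $D_i$ with respect to removing $v$ and whose $k$-value in $D_i$ is at most $k-i$; this is admissible because $k-i\le\lfloor\frac{n_i}{2}\rfloor-2$. Counting each such edge at the smallest index $i$ at which it appears makes these blocks pairwise disjoint: every edge attributed to $u_i$ is present in $D_i$, so its second endpoint avoids $\{u_0,\ldots,u_{i-1}\}$ and the edge cannot be incident to any earlier $u_{i'}$. Summing the block sizes gives $\sum_{i=0}^{k}(k-i+1)={k+2\choose 2}$ candidate edges, exactly the target count, so it remains only to certify that each candidate really contributes to $\sum_{j=0}^{k}I_j(D,D-v)$.

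The main obstacle, and the only nonroutine step, is transferring invariance and the $k$-value bound from the reduced drawing $D_i$ back to $D$: an edge invariant in $D_i$ may cease to be invariant in $D$, since reinserting $u_0,\ldots,u_{i-1}$ restores $i$ triangles on the edge. I would control this by a margin estimate, noting first that orientations of triangles agree across $D_i$ and $D$ because the reference side can be read off a common point of $F\subseteq f(v)\subseteq F_i$. For a candidate $e=u_ix$ that is an invariant $j$-edge in $D_i$, the lost triangle $u_ixv$ carries the strict majority orientation with majority margin $n_i-2-2j$; reinserting the $i$ vertices changes this margin by at least $-i$, so in $D$ the margin is at least
\begin{align*}
(n_i-2-2j)-i=n-2-2(j+i)\ge n-2-2k\ge 2>0,
\end{align*}
using $j+i\le k\le\lfloor\frac n2\rfloor-2$. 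Hence $u_ixv$ is still the strict majority in $D$, so $e$ remains invariant; moreover its $k$-value in $D$ is its minority count there, namely $j$ plus the number of reinserted triangles of minority orientation, which is at most $j+i\le k$. Thus $e$ is counted by $\sum_{j'=0}^{k}I_{j'}(D,D-v)$. The case $i=0$ needs no transfer, as $u_0$ already lies on $F$. Combining the disjoint blocks then yields $\sum_{j=0}^{k}I_j(D,D-v)\ge{k+2\choose 2}$.
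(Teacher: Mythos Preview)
The paper does not supply a proof of this lemma; it is quoted verbatim from \cite{seqshellable}, so there is no in-paper argument to compare against. Your proof is correct. The decomposition into blocks of invariant edges incident to each $u_i$ in $D_i$, together with the disjointness argument, is exactly in the spirit of how the paper handles the analogous pair-sequence statement (Lemma~\ref{lemma:main2_cumuplus_new}); the one genuinely new ingredient you need here---and which the paper never spells out for simple sequences---is the transfer of invariance from $D_i$ back to $D$. Your signed-margin estimate handles this cleanly: since the $v$-triangle lies on the strict majority side in $D_i$ with margin $n_i-2-2j$, and reinserting $u_0,\ldots,u_{i-1}$ can lower this margin by at most $i$, the inequality $n-2-2(j+i)\ge 2$ (from $j+i\le k\le\lfloor n/2\rfloor-2$) guarantees the majority side does not flip, so the edge stays invariant in $D$ with $k$-value at most $j+i\le k$. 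One small point worth stating explicitly is that the $k-i+1$ invariant edges you extract at $u_i$ cannot include $u_iv$ itself (edges incident to $v$ are deleted, not invariant), which you use implicitly when asserting the second endpoint lies in $V(D_i)\setminus\{u_i,v\}$.
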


\begin{lemma}\label{lemma:conj_conseq}
	Let $D$ a good drawing of $K_n$ with $n$ even and $v\in V$, such that for subdrawing $D-v$ % we have
	\[\tk{\lfloor\frac{n-1}{2}\rfloor-2}{D-v}=\tk{\frac{n}{2}-3}{D-v} \geq 3{\frac{n}{2}+1 \choose 4}\text{.}\]
	If $v$ has a simple sequence and Conjecture \ref{con:con1} holds, then $cr(D)\geq H(n)$.
	%	If $v$ has a pair-sequence, then $cr(D)\geq H(n)$.
\end{lemma}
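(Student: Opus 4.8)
The plan is to mirror the structure of the proof of Proposition~\ref{theorem:conjtheorem}, but for the case where $v$ carries a \emph{simple} sequence rather than a pair-sequence, and where $n$ is even (so $D-v$ has an odd number $n-1$ of vertices). First I would fix the superface $f(v)$ of $v$ in $D-v$ as the working reference face. Since $n-1$ is odd, Theorem~\ref{theorem:triplecumubound} applied to $D-v$ tells us that $\tk{\frac{n-1}{2}-2}{D-v} = \tk{\lfloor\frac{n-1}{2}\rfloor-2}{D-v} = \tk{\frac{n}{2}-3}{D-v}$ has the \emph{same value for every face} of $D-v$; the hypothesis $\tk{\frac{n}{2}-3}{D-v} \geq 3\binom{\frac{n}{2}+1}{4}$ therefore holds with respect to $f(v)$ as well, which by definition of $\td{}{}$ means $\td{\frac{n}{2}-3}{D-v}\geq 0$ for $f(v)$.

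Next I would pass back to $D$ and rebuild $\tk{}{D}$ from $D-v$ using the recursive identity of Lemma~\ref{lemma:recursive_triple_cumu}, relative to a face $F'\subseteq f(v)$ incident to $v$ in $D$. We have $\tk{\mm}{D} = \tk{\mmm}{D-v} + \tk{\mm}{D,v} + \di{\mm}{D,D-v}$, where $\mm=\frac{n}{2}-2$ and $\mmm=\frac{n}{2}-3$. The term $\tk{\mm}{D,v} = 2\binom{m+3}{3} = 2\binom{\frac{n}{2}+1}{3}$ is fixed for every face incident to $v$. For the invariant term, the simple sequence of $v$ gives, via Lemma~\ref{lemma:inv_vwatf_seq} together with the same summation trick as in Lemma~\ref{lemma:inv_vwatf_cumu2} (summing $\sum_{i=0}^{m}(m+1-i)$ over the at-least-one invariant $(\leq i)$-edge the sequence produces at each stage), a bound $\di{\mm}{D,D-v}\geq \binom{m+2}{2}$ --- but to get enough I would instead iterate the simple-sequence argument exactly as in the proof of Lemma~\ref{lemma:main2_cumuplus_new}, peeling off one vertex of the sequence at a time (here all stages of the deletion keep an even subdrawing or are handled by the seq-shellability structure), to accumulate $\di{\mm}{D,D-v}\geq \sum_{i=1}^{m+2}\binom{i}{2} = \binom{m+3}{3} = \binom{\frac{n}{2}+1}{3}$. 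Here is exactly where I expect to need care: a simple sequence of $v$ in $D$ has length $\lfloor\frac n2\rfloor-1 = \frac n2-1 = m+1$ vertices, so it supplies one more deletion than a pair-sequence; I must check that the running subdrawings stay in the regime where the "$\geq\binom{k+2}{2}$ invariant double-cumulated edges incident to the current sequence vertex" estimate applies, and that changing the reference face between stages is legitimate --- for the odd subdrawings this is Lemma~\ref{lemma:invarianteinvariant}, and for the even ones I would invoke Conjecture~\ref{con:con1} to control $\dd{}{}$ vs.\ $\td{}{}$ and hence keep the cumulated deviations nonnegative, which is precisely why the statement is conditional.

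Combining, for the face $F'$ incident to $v$ we obtain
\begin{align*}
\tk{\mm}{D} &\geq \left(\td{\mmm}{D-v} + 3\tbinom{\tfrac n2+1}{4}\right) + 2\tbinom{\tfrac n2+1}{3} + \tbinom{\tfrac n2+1}{3} \\
&\geq 3\tbinom{\tfrac n2+1}{4} + 3\tbinom{\tfrac n2+1}{3} = 3\tbinom{\tfrac n2+2}{4},
\end{align*}
using $\td{\mmm}{D-v}\geq 0$ and the Pascal identity $\binom{N}{4}+\binom{N}{3}=\binom{N+1}{4}$ with $N=\frac n2+1$. This gives $\tk{\frac n2-2}{D}\geq 3\binom{\frac n2+2}{4}$ for $F'$. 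To finish with Corollary~\ref{corollary:triple_cumu_lower_bound} in the even case I also need $\tk{\frac n2-3}{D}\geq 3\binom{\frac n2+1}{4}$ for the \emph{same} face $F'$; I would derive this from Lemma~\ref{lemma:recursive_triple_cumu} one level down, $\tk{\mmm}{D}=\tk{\mm[-1]}{D-v}+\tk{\mmm}{D,v}+\di{\mmm}{D,D-v}$, feeding in the corresponding lower bound on $\tk{\frac n2-3-1}{D-v}$ (which again follows from seq-shellability of $D-v$ via Corollary~\ref{corollary:seqshell3cumu}) and the partial invariant count from the simple sequence, together with Lemma~\ref{lemma:dbgzero} (whose hypothesis is supplied by Conjecture~\ref{con:con1}) to guarantee the relevant cumulated deviation is nonnegative. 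Once both inequalities hold for $F'$, Corollary~\ref{corollary:triple_cumu_lower_bound} yields $cr(D)\geq H(n)$. The main obstacle, as flagged, is the bookkeeping for the even intermediate subdrawings during the peeling of the simple sequence: there the double-cumulated invariant count is not face-invariant, so the argument genuinely relies on Conjecture~\ref{con:con1} (through Lemma~\ref{lemma:dbgzero}) to license the reference-face changes, and I would need to state precisely which face is used at each stage.
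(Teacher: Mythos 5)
Your overall skeleton is the same as the paper's: fix a face $F'$ incident to $v$ (contained in $f(v)$), apply Lemma~\ref{lemma:recursive_triple_cumu} at the two levels $k=\frac{n}{2}-2$ and $k=\frac{n}{2}-3$, feed in the hypothesis on $\tk{\frac{n}{2}-3}{D-v}$ (valid for $f(v)$ by face-independence in the odd drawing $D-v$), get $\tk{\frac{n}{2}-4}{D-v}\geq 3{\frac{n}{2}\choose 4}$ from Lemma~\ref{lemma:dbgzero} under Conjecture~\ref{con:con1}, and close with Corollary~\ref{corollary:triple_cumu_lower_bound}. Your final arithmetic is exactly the paper's. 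However, two of your intermediate steps are off. First, you underestimate what Lemma~\ref{lemma:inv_vwatf_seq} gives: it states $\sum_{j=0}^{k}I_{j}(D,D-v)\geq {k+2\choose 2}$ for \emph{every} $k\in\{0,\ldots,\lfloor\frac{n}{2}\rfloor-2\}$, so one further summation immediately yields $\di{k}{D,D-v}\geq\sum_{j=0}^{k}{j+2\choose 2}={k+3\choose 3}$, i.e.\ ${\frac{n}{2}+1\choose 3}$ and ${\frac{n}{2}\choose 3}$ at the two levels needed. The peeling argument you propose instead, with reference-face changes in even intermediate subdrawings ``licensed by Conjecture~\ref{con:con1}'', is both unnecessary and unjustified: Lemma~\ref{lemma:invarianteinvariant} permits face changes only for odd subdrawings and only at the top index, and Conjecture~\ref{con:con1} controls cumulated deviations, not the face-dependence of invariant counts. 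The entire point of a simple sequence (as opposed to a pair-sequence) is that it works relative to one fixed reference face, so no face change is needed here at all; had you genuinely needed that peeling, the proof would break.

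Second, for the lower bound on $\tk{\frac{n}{2}-4}{D-v}$ you appeal to seq-shellability of $D-v$ via Corollary~\ref{corollary:seqshell3cumu}, but seq-shellability of $D-v$ is \emph{not} a hypothesis of this lemma (only the single triple-cumulated bound and the simple sequence of $v$ are assumed). The correct and only available route --- which you also mention --- is: the hypothesis plus Corollary~\ref{corollary:triple_cumu_lower_bound} give $cr(D-v)\geq H(n-1)$, Conjecture~\ref{con:con1} supplies $\td{\frac{n}{2}-3}{D-v}\geq\dd{\frac{n}{2}-3}{D-v}$ for $f(v)$, and Lemma~\ref{lemma:dbgzero} then yields $\td{\frac{n}{2}-4}{D-v}\geq 0$, i.e.\ $\tk{\frac{n}{2}-4}{D-v}\geq 3{\frac{n}{2}\choose 4}$. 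With these two corrections your argument coincides with the paper's proof.
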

\begin{proof}
	Because $D-v$ is a drawing of $K_{n-1}$ with $n-1$ odd, it follows with Corollary \ref{corollary:triple_cumu_lower_bound} that $cr(D-v)\geq H(n-1)$, and that
	$\tk{\frac{n}{2}-3}{D-v} \geq 3{\frac{n}{2}+1 \choose 4}$
	is the same for every face of $D-v$. Lemma \ref{lemma:dbgzero} (together with Conjecture \ref{con:con1}) implies
	\[\tk{\frac{n}{2}-4}{D-v} \geq 3{\frac{n}{2} \choose 4}\text{.}\] 
	Since vertex $v$ has a simple sequence in $D$, Lemma \ref{lemma:inv_vwatf_seq} implies that 
	\begin{align*}
	\di{\frac{n}{2}-2}{D,D-v}\geq {\frac{n}{2}+1 \choose 3} \text{~~and~~}\di{\frac{n}{2}-3}{D,D-v}\geq {\frac{n}{2} \choose 3}\text{.}
	\end{align*} 
	Furthermore, for a face $F\in \mathcal{F}(D)$ that is incident to $v$ is $\tk{k}{D,v}\geq 2{k+3 \choose 3}$ for all $k\in \{0,\ldots, \frac{n}{2}-2 \}$.
	Therefore, in drawing $D$ with respect to a face $F\in \mathcal{F}(D)$ incident to $v$ it follows that
	\begin{align*}
	\tk{\frac{n}{2}-2}{D} &= \tk{\frac{n}{2}-3}{D-v} + \tk{\frac{n}{2}-2}{D,v} +\di{\frac{n}{2}-2}{D,D-v}\\
	&\geq 3{\frac{n}{2}+1 \choose 4}+2{\frac{n}{2}+1 \choose 3}+{\frac{n}{2}+1 \choose 3} 
	\\&= 3{\frac{n}{2}+2 \choose 4} \text{~and}\\
	\tk{\frac{n}{2}-3}{D} &= \tk{\frac{n}{2}-4}{D-v} +\tk{\frac{n}{2}-3}{D,v} 
	%\\&\hspace{5mm}
	+\di{\frac{n}{2}-3}{D-v}\\
	&\geq 3{\frac{n}{2} \choose 4}+2{\frac{n}{2} \choose 3}+{\frac{n}{2} \choose 3}= 3{\frac{n}{2}+1 \choose 4}\text{.}
	\end{align*}
	Finally, with Corollary \ref{corollary:triple_cumu_lower_bound} follows that $cr(D)\geq H(n)$.
\end{proof}
\begin{refproof}\textbf{Proof of Theorem \ref{theorem:conjmain}.}
	We start with the case for $n$ odd.
	We proceed with induction over the number of vertices.
	\\\textbf{Basis:} For $n=3$ there are no crossings, thus $cr(D)= 0=H(3)$, and
	for $n=4$ we have either $cr(D)= 0=H(3)$ or $cr(D)= 1 \geq 0 = H(3)$.
	\\\textbf{Induction step:}
	Let $D$ be a drawing fulfilling the requirements, i.e. $D$ has a sequence $v_1,\ldots,v_n$ of the vertices, such that every vertex $v_i$ with $i$ odd has a pair-sequence, and every vertex $v_i$ with $i$ even has a simple sequence. If we remove vertices $v_n$ and $v_{n-1}$, then the subdrawing $D-\{v_{n-1},v_n\}$ still fulfills the requirements for the sequence $v_1,\ldots,v_{n-2}$, and we assume that 
	\begin{align*}
	\tk{\lfloor\frac{n-2}{2}\rfloor-2}{D-\{v_{n-1},v_n\}} = \tk{\frac{n-1}{2}-3}{D-\{v_{n-1},v_n\}} \geq 3{\frac{n-1}{2}+1 \choose 4}
	\end{align*}
	Using this assumption with Lemma \ref{lemma:conj_conseq}, we have $cr(D-v_n)\geq H(n-1)$. 
	Lemma \ref{lemma:dbgzero} (together with Conjecture \ref{con:con1}) implies that 
	$\tk{\frac{n-1}{2}-3}{D-v_n}\geq 3{\frac{n-1}{2}+1 \choose 4}$
	for all faces of $D-v_n$.
	Because $v_n$ has a pair-sequence, it follows with Lemma \ref{lemma:main2_cumuplus_new} that $\di{\frac{n-1}{2}-2}{D,D-v_n}\geq {\frac{n-1}{2}+1 \choose 3}$, 
	and for any face incident to $v_n$ we have $\tk{\frac{n-1}{2}-2}{D, v_n}\geq 2{\frac{n-1}{2}+1 \choose 3}$.
	Altogether, it follows 
	\begin{align*}
	\tk{\frac{n-1}{2}-2}{D}&=\tk{\frac{n-1}{2}-3}{D-v_n}+\tk{\frac{n-1}{2}-2}{D,v_n} + \di{\frac{n-1}{2}-2}{D,D-v_n}
	\\&\geq 3{\frac{n-1}{2}+1 \choose 4}+2{\frac{n-1}{2}+1 \choose 3}+{\frac{n-1}{2}+1 \choose 3}\\&= 3{\frac{n-1}{2}+2 \choose 4}\text{.}
	\end{align*}
	Therefore, $\tk{\frac{n-1}{2}-2}{D}\geq 3{\frac{n-1}{2}+2 \choose 4}$ for all faces of $D$, and since $n$ is odd, it follows from Corollary \ref{corollary:triple_cumu_lower_bound} that $cr(D)\geq H(n)$. %$E_{\leq\leq\leq \frac{n-1}{2}-2}(D)\geq 3{\frac{n-1}{2}+2 \choose 4}$ and
	
	For the case $n$ even we remove vertex $v_n$ and have a drawing with an odd number of $n-1$ vertices that fulfills the requirements. Therefore, 
	$\tk{\lfloor\frac{n-1}{2}\rfloor-2}{D-v_n}=\tk{\frac{n}{2}-3}{D-v_n}\geq 3{\frac{n}{2}+1 \choose 4}$ for all faces of drawing $D-v_n$. With Lemma \ref{lemma:conj_conseq} follows $cr(D)\geq H(n)$.
\end{refproof}
\section{Conclusions and Outlook}\label{sec:conclusions}
We introduced \ssps{} drawings of complete graphs and verified the Harary-Hill Conjecture for this new class.
For the first time, we used more than a single globally fixed reference face in order to show lower bounds on the triple cumulated $k$-edges.
\Sspsy{} is only defined for drawings of $K_n$ with $n$ odd so far. 
Extending \sspsy{} to drawings of $K_n$ with an even number of vertices is an open problem. 
%For the case $n$ even, 
Here, it would suffice to show that \mbox{$\td{\mm}{D}+\td{\mmm}{D}\geq 0$} implies \mbox{$\td{\mmm}{D}\geq 0$} in order to generalize our results from \sspsy{} to \emph{\psy}, i.e. a version of seq-shellability with pair-sequences instead of simple sequences.
Moreover, we introduced $k$-deviations to formulate conditions under which we are able to select a new reference face in each subdrawing. 
Proving Conjecture \ref{con:con1} would settle the Harary-Hill Conjecture for a very broad class of drawings, comprising seq- and \sspsy.
Still, there are optimal drawings where each face touches a single vertex only \cite{DBLP:conf/cccg/AbregoAF0V14}, thus no vertex has a simple or pair-sequence.
%
%
%%%%%%%%%%%%%%%%%%%%%%%%%%%%%%%%%%%%%%%%%%%%%%%%%%%%%%%%%%%%%%%%%%%%%%%%%%%%%%%%%%%%%%%%%
%
\small
\bibliographystyle{abbrv}
%\bibliography{../singlepairseqshellability_camready/bibliography}

\end{document}